\numberwithin{equation}{section}
\renewcommand\i{{\rm 1\kern -.3600em 1}}
\newtheorem{theorem}{Theorem}[section]
\newtheorem{corollary}[theorem]{Corollary}
\newtheorem{lemma}[theorem]{Lemma}
\newtheorem{proposition}[theorem]{Proposition}
\theoremstyle{remark}
\theoremstyle{remark}
\theoremstyle{remark}
\newtheorem{remark}[theorem]{Remark}
\newcommand\M{\mathcal{M}}
\newcommand\R{\mathbb{R}}
\newcommand\Ga{\Gamma}
\newcommand\ga{\gamma}
\newcommand{\om}{\omega}
\newcommand\X{{\R^d}}
\newcommand\D{\mathcal{D}}
\begin{document}

\vspace{-20mm}
\begin{center}{\Large \bf
Finite difference calculus in the continuum}
\end{center}

{\large Dmitri Finkelshtein}\\ Department of Mathematics, Swansea University, Bay Campus,  Swansea SA1 8EN, U.K.;
e-mail: \texttt{d.l.finkelshtein@swansea.ac.uk}\vspace{2mm}

{\large \fbox{Yuri Kondratiev}}\vspace{2mm}

{\large Eugene Lytvynov}\\ Department of Mathematics, Swansea University, Bay Campus,  Swansea SA1 8EN, U.K.; e-mail: \texttt{e.lytvynov@swansea.ac.uk}\vspace{2mm}

{\large Maria Jo\~{a}o Oliveira}\\ DCeT, Universidade Aberta, 1269-001 Lisbon, Portugal; CMAFcIO, University of Lisbon, 1749-016 Lisbon, Portugal; e-mail: \texttt{mjoliveira@ciencias.ulisboa.pt}\vspace{2mm}

{\small
\begin{center}
{\bf Abstract}
\end{center}
\noindent  The paper describes known and new results about finite difference calculus on configuration spaces. We describe finite difference geometry on configuration spaces, connect finite difference operators with cannonical commutation relations, find explicit form for certain finite difference Markov generators on configuration spaces, and describe spaces of Newton series defined over the configuration spaces. 
\vspace{2mm}

\noindent {\bf Keywords:} falling factorials; spatial combinatorics; configuration spaces; Newton series; Poisson measure; Wick ordering; canonical commutation relations
\vspace{2mm}

\noindent
{\it Mathematics Subject Classification (2020):} 60G55, 60J25, 47B39, 47B93, 81S05, 05A10, 05A40

}

\section{Introduction}

This article was written with the enthusiasm, dedication, and deep scientific insight of our collaborator, co-author, and dear friend, Yuri Kondratiev, whose untimely passing has been a great loss to us and to the entire mathematical community.

\bigskip

The set $\mathbb{N}=\{0,1,2,\ldots\}$ of natural numbers is the cornerstone of number theory, combinatorics, discrete probability, and many other areas of mathematics. The obvious inclusion $\mathbb{N} \subset \mathbb{R}$ allows for the extension of some combinatorial quantities on $\mathbb{N}$ to polynomials on $\mathbb{R}$. For example, for each $n \in \mathbb{N}$, the binomial coefficients $\binom{n}{k} = \frac{n!}{k!(n-k)!}$ and the Pochhammer symbol $(n)_k = n(n-1)\dotsm(n-k+1)$, defined for $0 \leq k \leq n$, can be extended to polynomials of order $n$ of $t \in \mathbb{R}$:
\[ 
\binom{t}{k} := \frac{(t)_k}{k!} := \frac{t(t-1)\dotsm (t-k+1)}{k!}, \quad t \in \mathbb{R}.
\]
(For $k = 0$, $\binom{t}{0} = (t)_0 = 1$.) The polynomials $(t)_k$ are called \emph{falling factorials}, and they exhibit many properties similar to their combinatorial counterparts (see, e.g., \cite{Roman}).

Differential calculus on $\mathbb{R}$ is determined by the differential operator $\partial : f \to f'$, which acts on monomials as follows: $\partial t^n = nt^{n-1}$. The corresponding role for the falling factorials is taken by the finite difference operator
\[ 
(D^+f)(t) := f(t+1) - f(t)
\]
defined on functions $f : \mathbb{R} \to \mathbb{R}$. Indeed, we then have
\[ 
D^+ (t)_n = n (t)_{n-1}.
\]
One can also consider the ``dual'' operator (with respect to formal integration over $\mathbb{R}$)
\[ 
(D^- f)(t) := f(t-1) - f(t),
\]
where $D^- (t)_n = -n (t-1)_{n-1}$.

Equivalently, the falling factorials can be defined by their (exponential) generating function
\[ 
e_\lambda(t) := \sum_{n=0}^\infty \frac{\lambda^n}{n!} (t)_n = (1+\lambda)^t = e^{t \log(1+\lambda)}
\]
which satisfies the finite difference Malthus equation:
\[ 
D^+ e_\lambda(t) = \lambda e_\lambda(t).
\]
In this way, we arrive at the framework of finite difference calculus (see, e.g., \cite{FS, Gelfond, Milne-Thomson}).

The properties of monomials and falling factorials are deeply related to the properties of the corresponding differential and difference operators, respectively. The generalization of these properties to other polynomial sequences and the corresponding, so-called, lowering operators led to the development of \emph{umbral calculus} (see, e.g., \cite{Roman} and the references in \cite{Umbral}).

Another important interpretation of the set $\mathbb{N}$ of natural numbers is the state space of a system in, e.g., population ecology, where $n \in \mathbb{N}$ represents the size of the population. The random changes of such a system are described by distributions of discrete random variables, Markov chains, birth-and-death processes, etc. (see, e.g., \cite[Chapter~II]{Feller}). This representation means that one ignores the distribution of the members of a population in their location space, e.g. $\X$ (to slightly simplify the considerations below). If one does want to include the information about the location of individuals, then it is natural to replace $\mathbb{N}$ with the space of (locally finite) configurations in $\X$, denoted by $\Gamma(\X)$. Hence, one aims at a generalization of certain key structures of combinatorics and analysis on $\mathbb{N}$ to those on $\Gamma(\X)$. For an attempt to achieve some of these aims, we refer to our recent papers \cite{Stirling, Umbral}.

The present paper aims at reviewing and developing some basics of finite difference calculus on the configuration space. The paper is organsied as follows. In Section \ref{vyrts6ueie}, we define the configuration space $\Gamma(\X)$, the falling factorials on $\Gamma(\X)$, and recall some well-known facts related to Poisson measures on $\Gamma(\X)$. In Section \ref{drtsy5a3w}, we define two kinds of finite difference operators acting on functions on $\Gamma(\X)$. We also introduce elements of differential geometry (or rather finite difference geometry) on $\Gamma(\X)$ that arise through these finite difference operators. In Section \ref{cxdtsqdhqiu}, we prove that the two kinds of finite difference operators defined in the preceding section, jointly with the operators of multiplication by a monomial, naturally lead to a representation of the canonical commutation relations. In Section \ref{yutfcr}, we consider several classes of Markov generators on $\Gamma(\X)$ that are built upon the finite difference operators. Finally, in Section~\ref{vyei67}, we construct two spaces of functions on $\Gamma(\X)$ that are given through their Newton series, i.e., convergent series of falling factorials.

\section{Elements of spatial combinatorics}\label{vyrts6ueie}

Denote by $\Ga(\X)$ the set of all locally finite configurations (subsets) from $\X$, 
$$
\Ga(\X): =\{ \ga\subset \X \mid |\ga\cap K|<\infty  \text{ for any compact }K\subset \X\}.
$$
Here $|\ga\cap K|$ denotes the number of elements of the set $\ga\cap K$.
It is this set that we will use as a substitution for $\mathbb N$ in  spatial (continuous)  combinatorics.  

Configuration spaces possess various  discrete and continuous features. For differential geometry, differential operators, and diffusion processes, see e.g.\ \cite{AKR}.  On the other hand, discreteness of an individual configuration makes it possible to introduce a proper analog of finite difference calculus.

If $\Ga(\X)$ plays the role of $\mathbb N$, then a natural counterpart of $\R$ can be given by the space $ \mathcal{M}(\X)$ of signed Radon measures on $\X$ with the henceforth fixed Borel $\sigma$-algebra on $\X$. The embedding $\Ga(\X)\subset \mathcal{M}(\X)$ is achieved by interpreting configurations as discrete  Radon measures on $\X$,
$$
\Ga(\X) \ni  \ga \mapsto \ga(dx):= 
\sum_{y\in\ga} \delta_y (dx) \in \M(\X).
$$ Here $\delta_y$ is the Dirac measure with mass at $y$.
Even a wider continuous extension is possible if we recall that $\M(\X)$ is naturally embedded into the space $\mathcal D'(\X)$ of distributions which is the dual of the space $\mathcal D(\X)=C_0^\infty(\X)$ of infinitely differentiable functions on $\X$ with compact support endowed with the natural topology. We will denote by $\langle\omega,\xi\rangle$ the dual pairing between $\omega\in\mathcal D'(\X)$ and $\xi\in \mathcal D(\X)$.

We will now introduce an analog of  a generating function from
classical combinatorics. 
For a test function  
$-1< \xi \in  \mathcal{D}(\X)$, consider the function
\begin{equation}\label{vyrsw6u}
E_{\xi}  (\omega) := e^{\langle\omega,\ln (1+\xi)\rangle} ,\quad \omega \in \mathcal{D}'(\X).
\end{equation}
The power decomposition with respect to  $\xi$ gives
\begin{equation}
\label{NP}
E_\xi(\omega)=\sum_{n=0}^\infty  \frac{1}{n!}  \langle (\omega)_n,\xi^{\otimes n}\rangle.
\end{equation}
Generalized kernels $(\omega)_n \in \mathcal{D}'(\R^{d})^{\odot n}$ are called  falling factorials on $\mathcal{D}'(\X)$. Here $\odot$ denotes the symmetric  tensor product, and we use the same notation $\langle \cdot,\cdot\rangle$ for the dual pairing between $\mathcal{D}'(\R^{d})^{\odot n}$ and $\mathcal{D}(\R^{d})^{\odot n}$.
For further details, see \cite{Umbral,Stirling}.

We may also define binomial coefficients on $ \mathcal{D}'(\X)$ by
\begin{equation}\label{gvgvf}
\binom \omega n:= \frac{(\omega)_n}{n!}.
\end{equation}

\begin{theorem}[\!\!{\cite{Umbral}}]
(i) For $\om \in \mathcal D' (\X)$, 
\begin{gather}
(\om)_0= 1,\quad (\om)_1 =\om,\notag\\
(\om)_n(x_1,\dots,x_n)= \om(x_1) (\om(x_2)-\delta_{x_1}(x_2))
\dotsm (\om(x_n) -\delta_{x_1}(x_n) - \dots -  \delta_{x_{n-1}} (x_n)).\label{cxrw435q}
\end{gather}
In the special case $\om=\ga =\sum_{i\in\mathbb N}\delta_{x_i}\in\Gamma(\X)$,
  \begin{equation}\label{vuydts}
 \binom \ga n  =  \sum_{\{i_1.\dots, i_n\} \subset \mathbb N}
 \delta_{x_1} \odot \dots \odot \delta_{x_n}.
 \end{equation}
 In particular, $\binom \gamma n$ is a symmetric (discrete) Radon measure on $(\X)^n$. 

(ii) For any $\omega_1,\omega_2\in\mathcal D'(\X)$, 
\begin{equation}\label{gu7dei6}
(\omega_1 + \omega_2)_n= \sum_{k=0}^n \binom nk (\omega_1)_k \odot (\omega_2)_{n-k}.
\end{equation}
\end{theorem}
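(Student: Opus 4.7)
The plan is to exploit the generating function $E_\xi$ defined in (\ref{vyrsw6u}) together with uniqueness of the symmetric kernels $(\omega)_n$ in the expansion (\ref{NP}).

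Part (ii) comes first, as it is the easier half. Linearity of the pairing gives $E_\xi(\omega_1+\omega_2) = E_\xi(\omega_1)\,E_\xi(\omega_2)$. Substituting (\ref{NP}) on both sides and taking the Cauchy product in powers of $\xi$, the coefficient of $\tfrac{1}{n!}\xi^{\otimes n}$ on the right is $\sum_{k=0}^n\binom{n}{k}(\omega_1)_k\otimes(\omega_2)_{n-k}$; since $\xi^{\otimes n}$ is symmetric, only the symmetrisation $(\omega_1)_k\odot(\omega_2)_{n-k}$ contributes, and (\ref{gu7dei6}) follows by uniqueness of the power expansion.

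For part (i), I would begin with the configuration case $\omega=\gamma=\sum_i\delta_{x_i}$. Because $\xi$ has compact support, only finitely many terms $\ln(1+\xi(x_i))$ are nonzero, so the exponential in (\ref{vyrsw6u}) reduces to a finite product:
\[
E_\xi(\gamma)=\prod_i\bigl(1+\xi(x_i)\bigr)=\sum_{n\ge 0}\ \sum_{\{i_1,\dots,i_n\}\subset\mathbb{N}}\xi(x_{i_1})\cdots\xi(x_{i_n}).
\]
Matching this against (\ref{NP}) and applying (\ref{gvgvf}) produces (\ref{vuydts}). For the general identity (\ref{cxrw435q}), observe that, for each fixed $\xi^{\otimes n}$, both sides are polynomials of degree $n$ in $\omega\in\mathcal{D}'(\X)$: the LHS by its very definition through the generating function, the RHS because each factor $\omega(x_i)-\sum_{j<i}\delta_{x_j}(x_i)$ is affine in $\omega$. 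I would verify the formula first on a single point mass $\omega=c\delta_x$, where $E_\xi(c\delta_x)=(1+\xi(x))^c=\sum_n\binom{c}{n}\xi(x)^n$ yields $(c\delta_x)_n=(c)_n\,\delta_x^{\otimes n}$ in exact agreement with (\ref{cxrw435q}). Iterating part (ii) then extends the identity to finite combinations $\omega=\sum_{i=1}^N c_i\delta_{x_i}$, and density of such combinations in $\mathcal{D}'(\X)$ together with weak-$*$ polynomial continuity in $\omega$ concludes the argument.

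The principal obstacle will be the combinatorial identification showing that the explicit product in (\ref{cxrw435q}), evaluated at $\omega=\sum_i c_i\delta_{x_i}$, reproduces the multilinear expansion coming from iterating (\ref{gu7dei6}). This ultimately reduces to the scalar multinomial falling-factorial identity $\bigl(\sum_i c_i\bigr)_n=\sum_{k_1+\cdots+k_N=n}\binom{n}{k_1,\dots,k_N}\prod_i (c_i)_{k_i}$ lifted to symmetric tensors on $\X^n$; the verification is not difficult but requires careful bookkeeping of how the nested corrections $-\sum_{j<i}\delta_{x_j}(x_i)$ in (\ref{cxrw435q}) produce the multinomial weights when points among $x_1,\dots,x_n$ coincide.
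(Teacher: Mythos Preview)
The paper does not supply its own proof of this theorem: it is stated with the attribution \cite{Umbral} and no argument follows. There is therefore nothing in the present paper to compare your proposal against.

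On its own merits your proposal is sound. Part (ii) via the multiplicativity $E_\xi(\omega_1+\omega_2)=E_\xi(\omega_1)E_\xi(\omega_2)$ and uniqueness of the symmetric Taylor kernels is exactly the intended mechanism for a binomial-type sequence. For part (i), your derivation of \eqref{vuydts} from the finite-product expansion $E_\xi(\gamma)=\prod_i(1+\xi(x_i))$ is correct. For the general formula \eqref{cxrw435q}, your density strategy is valid: both sides, paired with a fixed symmetric test function, are polynomials in $\omega$ and hence weak-$*$ continuous, and finite real linear combinations of Dirac measures are weak-$*$ dense in $\mathcal D'(\X)$ (difference quotients recover derivatives of Diracs, so distributions of arbitrary order are reached). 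The combinatorial check you single out --- matching the iterated binomial expansion from (ii) against the explicit product on $\omega=\sum_i c_i\delta_{x_i}$ --- does reduce to the scalar multinomial falling-factorial identity and goes through with routine bookkeeping.

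A shorter alternative, avoiding the density detour, is to prove \eqref{cxrw435q} by induction on $n$ from the recurrence that the generating function forces on the kernels (equivalently, the relation \eqref{vfdr6r5} quoted later in the paper, which holds for general $\omega$). That recurrence expresses $(\omega)_{n+1}$ in terms of $\omega\otimes(\omega)_n$ minus a diagonal correction, and the closed product formula then follows immediately. This is closer in spirit to how \cite{Umbral} organises the argument and sidesteps the bookkeeping you flag as the principal obstacle.
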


Formula \eqref{gu7dei6} states that the falling factorials on $ \mathcal D' (\X)$ have the binomial property. This formula can also be written in the form of the  Chu--Vandermond identity on $ \mathcal D' (\X)$:
$$\binom{\omega_1 + \omega_2}n=\sum_{k=0}^n \binom{\omega_1}k\odot \binom{\omega_2}{n-k}.$$

In our considerations, it will be useful to use the space of polynomials on $\D'(\X)$.
The polynomials are defined as the linear hull of monomials
$$
\langle\omega^{\otimes n},f^{(n)}\rangle,\quad \omega \in \D'(\X),
$$
where $f^{(n)} \in\D(\X)^{\odot n}$.

Probability measures on the configuration space $\Ga(\X)$, equipped with the cylinder $\sigma$-algebra, give states of continuous particle systems. The Poisson measures form arguably the simplest, yet most important class of probability measures on  $\Ga(\X)$.
 Take a diffuse Radon measure $\sigma$ on
$\X$. The Poisson measure $\pi_\sigma$ is defined via its Laplace
transform:
\begin{equation*}
\int_{\Ga(\X)} e^{\langle\ga,\,f\rangle} \pi_\sigma (d\ga)= \exp\bigg(\int_\X (e^{f(x)} -1)\sigma(dx) \bigg)
\end{equation*}
for all $f\in C_0(\X)$ (the set of continuous functions on $\X$ with compact support).  An alternative definition of $\pi_\sigma$ is given by the Mecke identity:
\begin{equation}
\label{Georgii}
\int_{\Ga(\X)} \sum_{x\in\ga} F(x,\ga) \pi_{\sigma}(d\ga) =
\int_{\Ga(\X)} \int_{\X}  F(x,\ga\cup x) \sigma(dx) \pi_\sigma(d\ga)
\end{equation}
for all appropriate functions $F(x,\ga)$, e.g.\ for all measurable non-negative ones. 

\begin{lemma}
\label{FFN}
Let a function $f^{(n)} \in L^1 ((\X)^n, \sigma^{\otimes n})$ be symmetric. Then, for $\pi_\sigma$-a.a.\ $\gamma\in\Ga(\X)$, we have $\big\langle(\gamma)_n,|f^{(n)}|\big\rangle<\infty$. Furthermore, the function $\big\langle(\cdot)_n,f^{(n)}\big\rangle $ belongs to 
 $L^1(\Ga(\X), \pi_\sigma)$ and
\begin{equation*}
\int_{\Ga(\X)} \bigl\lvert \langle(\ga)_n,f^{(n)}\,\rangle\bigr\rvert\, \pi_{\sigma} (d\ga) \leq \int_{(\X)^n} |f^{(n)}(x_1,\dots,x_n)| \sigma^{\otimes n}(dx_1\dotsm dx_n).
\end{equation*}

\end{lemma}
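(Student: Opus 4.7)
The plan is to combine formula \eqref{vuydts} — which realises $(\gamma)_n$, for $\gamma\in\Gamma(\X)$, as a positive symmetric Radon measure on $(\X)^n$ — with an iterated version of the Mecke identity \eqref{Georgii}. By \eqref{vuydts} and the symmetry of $f^{(n)}$, for $\gamma=\sum_i\delta_{x_i}\in\Gamma(\X)$,
\begin{equation*}
\bigl\langle(\gamma)_n, f^{(n)}\bigr\rangle = \sum_{\substack{(y_1,\dots,y_n)\in\gamma^n\\ y_1,\dots,y_n\ \text{pairwise distinct}}} f^{(n)}(y_1,\dots,y_n),
\end{equation*}
so the task reduces to controlling the $\pi_\sigma$-expectation of a sum over ordered $n$-tuples of distinct points of $\gamma$. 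Measurability of $\langle(\cdot)_n,|f^{(n)}|\rangle$ follows by monotone approximation of $|f^{(n)}|$ by bounded compactly supported symmetric functions, for which the above sum is finite and cylinder-measurable in $\gamma$.

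First I would prove, by induction on $n$, the following multivariate extension of \eqref{Georgii}: for every measurable $F:(\X)^n\times\Gamma(\X)\to[0,\infty]$,
\begin{equation*}
\int_{\Gamma(\X)}\!\sum_{\substack{(y_1,\dots,y_n)\in\gamma^n\\ \text{distinct}}}\! F(y_1,\dots,y_n,\gamma)\,\pi_\sigma(d\gamma)=\int_{(\X)^n}\!\int_{\Gamma(\X)}\! F(y_1,\dots,y_n,\gamma\cup\{y_1,\dots,y_n\})\,\pi_\sigma(d\gamma)\,\sigma^{\otimes n}(dy_1\dotsm dy_n).
\end{equation*}
The base case $n=1$ is \eqref{Georgii}. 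For the inductive step, one peels off the outermost summation in $y_1$ via \eqref{Georgii}, observes that $(\gamma\cup y_1)\setminus\{y_1\}=\gamma$ outside a $\sigma\otimes\pi_\sigma$-null set (using diffuseness of $\sigma$), and applies the induction hypothesis to the inner sum over distinct $(y_2,\dots,y_n)\in\gamma^{n-1}$, treating $y_1$ as a parameter; the distinctness among all $y_i$ is preserved because the diagonals are $\sigma^{\otimes n}$-null. Tonelli is available throughout since $F\ge0$.

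Applying this identity with $F(y_1,\dots,y_n,\gamma):=\lvert f^{(n)}(y_1,\dots,y_n)\rvert$ (independent of $\gamma$) yields
\begin{equation*}
\int_{\Gamma(\X)}\bigl\langle(\gamma)_n,\lvert f^{(n)}\rvert\bigr\rangle\,\pi_\sigma(d\gamma)=\int_{(\X)^n}\lvert f^{(n)}\rvert\,d\sigma^{\otimes n}<\infty,
\end{equation*}
so $\langle(\gamma)_n,|f^{(n)}|\rangle<\infty$ for $\pi_\sigma$-a.a.\ $\gamma$. Since $(\gamma)_n$ is then a positive Radon measure on $(\X)^n$, the pointwise bound $\lvert\langle(\gamma)_n,f^{(n)}\rangle\rvert\le\langle(\gamma)_n,\lvert f^{(n)}\rvert\rangle$ gives $\langle(\cdot)_n,f^{(n)}\rangle\in L^1(\Gamma(\X),\pi_\sigma)$ and the claimed inequality (in fact with equality when $f^{(n)}\ge0$).

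I expect the main obstacle to be the careful inductive verification of the multivariate Mecke identity — in particular, correctly bookkeeping both the distinctness constraint among the variables being pulled out and the growing argument $\gamma\cup\{y_1,\dots,y_n\}$ on the right-hand side, and invoking the diffuseness of $\sigma$ at each step to discard the diagonal contributions. Once that identity is in place, the remainder of the proof is a single substitution followed by the elementary pointwise bound.
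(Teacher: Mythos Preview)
Your proposal is correct and follows essentially the same route as the paper: express $\langle(\gamma)_n,f^{(n)}\rangle$ as the sum over ordered $n$-tuples of distinct points of $\gamma$ (the paper's formula~\eqref{vyrsw53hxf}), then apply the Mecke identity $n$ times to the nonnegative integrand. Your inductive statement of the multivariate Mecke identity and the attention to diffuseness and diagonals are just a more explicit rendering of what the paper compresses into the phrase ``applying $n$ times the Mecke identity.''
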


\begin{proof}
Although this result is well known, for the reader's convenience we present its proof. 
It is sufficient to prove that, if additionally $f^{(n)}\ge0$, then  
\begin{equation}
\label{FgftyF}
\int_{\Ga(\X)} \langle(\ga)_n,f^{(n)}\rangle\, \pi_{\sigma} (d\ga) =\int_{(\X)^n} f^{(n)}(x_1,\dots,x_n) \sigma^{\otimes n}(dx_1\dotsm dx_n).
\end{equation}
We note that \eqref{gvgvf} and \eqref{vuydts} imply
\begin{equation}\label{vyrsw53hxf}
 \langle(\ga)_n,f^{(n)}\rangle=\sum_{x_1\in\gamma}\sum_{x_2\in\gamma\setminus \{x_1\}}\dotsm \sum_{x_n\in\gamma\setminus \{x_1,x_2,\dots,x_{n-1}\}}f^{(n)}(x_1,\dots,x_n).
\end{equation}
Integrating the left and right hand sides of equality \eqref{vyrsw53hxf} with respect to $\pi_\sigma $ and applying $n$ times the Mecke identity \eqref{Georgii}, we get \eqref{FgftyF}.
\end{proof}

The following lemma, which follows directly from  \eqref{vyrsw53hxf} will be useful below.

\begin{lemma}\label{crdstj6sw5a}
Let a function $f^{(n+1)}:(\X)^{n+1}\to\R$ have a compact support. Then, for each $\gamma\in\Gamma(\X)$,
$$\sum_{x\in\gamma}\langle(\gamma\setminus x)_n,f^{(n+1)}(x,\cdot)\rangle=\langle(\gamma)_{n+1},f^{(n+1)}\rangle.$$
In the above formula, one can replace the function $f^{(n+1)}$ with its symmetrization.
\end{lemma}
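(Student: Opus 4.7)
The plan is to derive the identity as a direct consequence of formula \eqref{vyrsw53hxf}, essentially by peeling off the outermost summation. First I would apply \eqref{vyrsw53hxf} to the right-hand side with $n$ replaced by $n+1$, obtaining
\[
\langle(\gamma)_{n+1},f^{(n+1)}\rangle=\sum_{x_1\in\gamma}\sum_{x_2\in\gamma\setminus\{x_1\}}\cdots\sum_{x_{n+1}\in\gamma\setminus\{x_1,\dots,x_n\}}f^{(n+1)}(x_1,\dots,x_{n+1}).
\]
The compact support assumption guarantees that only finitely many tuples contribute, so iterated sums may be freely rearranged.

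Next I would pull the outermost sum out, rename $x_1=x$, and note that $\gamma\setminus\{x_1\}=\gamma\setminus x$, while for each fixed $x\in\gamma$ the remaining $n$-fold sum is
\[
\sum_{x_2\in\gamma\setminus x}\sum_{x_3\in(\gamma\setminus x)\setminus\{x_2\}}\cdots\sum_{x_{n+1}\in(\gamma\setminus x)\setminus\{x_2,\dots,x_n\}}f^{(n+1)}(x,x_2,\dots,x_{n+1}).
\]
Applying \eqref{vyrsw53hxf} once more, now to the configuration $\gamma\setminus x$ and to the function $f^{(n+1)}(x,\cdot)\in(\X)^n\to\R$, identifies this inner expression with $\langle(\gamma\setminus x)_n,f^{(n+1)}(x,\cdot)\rangle$. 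Summing over $x$ gives the claimed identity.

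For the final sentence about the symmetrization, I would invoke the fact from \eqref{vuydts} that $(\gamma)_{n+1}$ is a symmetric (discrete) Radon measure, so $\langle(\gamma)_{n+1},f^{(n+1)}\rangle$ coincides with $\langle(\gamma)_{n+1},\widetilde f^{(n+1)}\rangle$ where $\widetilde f^{(n+1)}$ is the symmetrization of $f^{(n+1)}$. Applying the already-proved identity to both $f^{(n+1)}$ and $\widetilde f^{(n+1)}$ and equating yields the assertion.

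There is no substantial obstacle here; the only point that requires a little care is the justification of rearranging the nested finite sums, which is handled by the compact support hypothesis ensuring finiteness of the number of non-zero terms.
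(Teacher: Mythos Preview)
Your proposal is correct and follows exactly the approach the paper intends: the paper does not spell out a proof but simply states that the lemma ``follows directly from \eqref{vyrsw53hxf}'', and your argument is precisely the natural unpacking of that assertion---applying \eqref{vyrsw53hxf} at level $n+1$, separating the outer summation, and recognising the inner $n$-fold sum as $\langle(\gamma\setminus x)_n,f^{(n+1)}(x,\cdot)\rangle$ via \eqref{vyrsw53hxf} again. Your treatment of the symmetrization remark is also appropriate.
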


\section{Finite difference geometry in the continuum}\label{drtsy5a3w}

Let $F:\Gamma(\X)\to\mathbb R$ and let $\gamma\in\Gamma(\X)$. For each $x\in \ga$, we define an elementary Markov death operator (death gradient)
$$
D^-_x F(\ga):= F(\ga\setminus x)-F(\ga).
$$
We define the tangent space $T^-_{\ga} (\Ga):= L^2 (\R^d, \ga)$. Then, for $\psi \in C_0(\R^d)$,
\begin{align*}
D^-_{\psi} F(\ga):&=\sum_{x\in\ga} \psi(x) D^-_x F(\ga)=\sum_{x\in\ga} \psi(x) (F(\ga\setminus x)-F(\ga))
\\&=\langle \psi, D^-_\cdot F(\ga)\rangle_{T_\ga^-(\Ga)}
\end{align*}
is a directional (finite difference)  derivative.

Similarly, we define, for each $x\in \R^d$, 
$$
D^+_x F(\ga):= F(\ga \cup x)-F(\ga)
$$
and the tangent space $T^+_{\ga} (\Ga):= L^2(\R^d, dx)$. Then, for $\psi \in C_0(\R^d)$ and an appropriate function $F$, 
\begin{align*}
D^+_\psi F(\ga):&=  \int_{\R^d} \psi(x) D^+_x F(\ga) dx= \int_{\R^d} \psi(x) (F(\ga \cup x)-F(\ga)) dx
\\&= \langle \psi,D_\cdot^+ F(\ga)\rangle_{T_\ga^+(\Ga)}
\end{align*}
is another directional (finite difference) derivative.

The joint tangent space at $\ga\in \Ga(\X)$ is defined by
$$
T_\ga (\Ga) :=  T^+_{\ga} (\Ga) \oplus T^-_{\ga} (\Ga)
$$
and the tangent bundle is then defined as 
$$
T(\Ga):= \bigcup_{\ga\in \Ga(\X)} T_\ga (\Ga).
$$

A vector field $V\in T(\Ga)$ is a section of the tangent bundle, thus 
$$
V(\ga,x)= (V^+ (\ga,x), V^- (\ga,x))
$$
with 
$V^\pm (\gamma,\cdot)\in T_\gamma^\pm (\Ga)$. In particular,
for a   function $F:\Ga(\X) \to \R$, we define $ (D^+ F) (\ga, x)=( D^+_x F)(\ga) $
and $ (D^- F)(\ga,x)= (D^-_x F)(\ga)$, and consider the the vector field $DF=(D^+ F, D^- F)$, which is naturally called the finite difference gradient of the function $F$.

The derivatives along a vector field $V=(V^+,V^-)$  are given as follows: 
$$
(D^+_{V^+}\,F)(\ga) = \int_{\X} V^+(\ga,x) (F(\ga\cup x) - F(\ga))dx=\langle V^+(\ga,\cdot),D_\cdot^+F(\ga)\rangle_{T_\ga^+(\Ga)},
$$
$$
(D^-_{V^-}\,F) (\ga) = \sum_{x\in\ga} V^-(\ga,x) (F(\ga\setminus x) -F(\ga))=\langle V^-(\ga,\cdot),D_\cdot^-F(\ga)\rangle_{T_\ga^-(\Ga)},
$$
$$
D_V F = D^+_{V^+}F + D^-_{V^-} F=\langle V, DF\rangle_{T(\Ga)}.
$$

Similarly to classical differential geometry, we would like to introduce a divergence
for a vector field. Having in mind the usual duality relation between the gradient and the divergence, we will use an integration  over $\Gamma(\X)$ to establish such a duality.  As a reference measure on $\Gamma(\X)$, let us choose the Poisson measure $\pi=\pi_\sigma$ with the Lebesgue density $\sigma(dx)=dx$.

A function $F:\Gamma(\X)\to\R$ is called local if there exists a bounded set $\Lambda\subset\X$ such that $F(\gamma)=F(\gamma\cap\Lambda)$ for all $\gamma\in\Gamma(\X)$. In particular, $(D^+F)(\gamma,x)=(D^-F)(\gamma,x)=0$ for  all $\gamma\in\Gamma(\X)$ and $x\in\Lambda^c$.

For an appropriate vector field $V$, its divergence $\operatorname{Div} V:\Gamma(\X)\to\R$ is defined by the relation
$$
\int_{\Ga(\X)} (D_V F)(\ga)\, \pi(d\ga)= - \int_{\Ga(\X)} (\operatorname{Div} V)(\ga) F(\ga)\, \pi(d\ga),
$$ which holds for all measurable bounded local functions $F:\Gamma(\X)\to\R$. 

The following theorem establishes the precise form of the divergence. 

\begin{theorem} Let $V=(V^+,V^-)$ be a vector field that satisfies the following assumption: for each bounded measurable $\Lambda\subset\X$, 
\begin{equation}\label{utf7i}
V^+(\gamma,x),\ V^-(\gamma\cup x,x)\in L^1(\Gamma(\X)\times\Lambda,d\pi(\gamma)\,dx)\end{equation}
and
\begin{equation}\label{zrweitr}
 V^-(\gamma\cup x,x)-V^+(\gamma,x)\in L^1(\Gamma(\X)\times\X ,d\pi(\gamma)\,dx).\end{equation}
Then
\begin{equation}\label{vcyrstwu56}
(\operatorname{Div} V)(\ga)= \sum_{x\in \ga} \big(V^+ (\ga\setminus x,x)-V^- (\ga,x)\big) +\int_{\X} \big(V^- (\ga \cup x,x)-V^+ (\ga,x)\big) dx.
\end{equation}

\end{theorem}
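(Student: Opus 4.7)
\emph{Plan.} The proof starts from the defining identity
\[
\int_{\Ga(\X)} (D_V F)(\ga)\,\pi(d\ga) = -\int_{\Ga(\X)} F(\ga)(\operatorname{Div}V)(\ga)\,\pi(d\ga)
\]
and transforms the left-hand side so that every value of $F$ appears at the unshifted configuration $\ga$. The main tool is the Mecke identity \eqref{Georgii}, used twice: once to exchange an $x$-integral for a $\ga$-sum, and once in the reverse direction. Throughout I take $F$ bounded and local with support in a bounded $\Lambda\subset\X$; by locality, $F(\ga\cup x)-F(\ga)=0$ and $F(\ga\setminus x)-F(\ga)=0$ for $x\in\Lambda^c$, so every $x$-integral and $\ga$-sum arising from $D_V F(\ga)$ effectively reduces to $\Lambda$. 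Hypothesis \eqref{utf7i} is precisely what makes each of the resulting pieces absolutely integrable, so that Fubini and \eqref{Georgii} may be freely applied.

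\emph{Main computation.} Split
\[
\int_{\Ga(\X)}(D_V F)(\ga)\,\pi(d\ga) = I_1 - I_2 + I_3 - I_4,
\]
with $I_1 = \int\int_\Lambda V^+(\ga,x)F(\ga\cup x)\,dx\,\pi(d\ga)$, $I_2 = \int F(\ga)\int_\Lambda V^+(\ga,x)\,dx\,\pi(d\ga)$, $I_3 = \int\sum_{x\in\ga\cap\Lambda}V^-(\ga,x)F(\ga\setminus x)\,\pi(d\ga)$, and $I_4 = \int F(\ga)\sum_{x\in\ga\cap\Lambda}V^-(\ga,x)\,\pi(d\ga)$. Applying \eqref{Georgii} with $g(x,\ga)=V^-(\ga,x)F(\ga\setminus x)\mathbf{1}_\Lambda(x)$ (and using $(\ga\cup x)\setminus x=\ga$ for $\pi$-a.e.\ $\ga$) converts $I_3$ into $\int\int_\Lambda V^-(\ga\cup x,x)F(\ga)\,dx\,\pi(d\ga)$; applying \eqref{Georgii} in reverse with $g(x,\ga)=V^+(\ga\setminus x,x)F(\ga)\mathbf{1}_\Lambda(x)$ converts $I_1$ into $\int\sum_{x\in\ga\cap\Lambda}V^+(\ga\setminus x,x)F(\ga)\,\pi(d\ga)$. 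Collecting the four terms and factoring out $F(\ga)$ yields
\[
\int_{\Ga(\X)}(D_V F)(\ga)\,\pi(d\ga) = \int_{\Ga(\X)} F(\ga)\bigg[\sum_{x\in\ga\cap\Lambda}(V^+(\ga\setminus x,x) - V^-(\ga,x)) + \int_\Lambda(V^-(\ga\cup x,x) - V^+(\ga,x))\,dx\bigg]\pi(d\ga).
\]

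\emph{Passage to $\X$ and main obstacle.} It remains to replace the $\Lambda$-restricted bracket by its full-$\X$ analogue in \eqref{vcyrstwu56}. Using locality ($F(\ga\cup x)=F(\ga)$ for $x\in\Lambda^c$) and one further application of \eqref{Georgii}, one obtains the identity
\[
\int_{\Ga(\X)} F(\ga)\sum_{x\in\ga\cap\Lambda^c}(V^+(\ga\setminus x,x) - V^-(\ga,x))\,\pi(d\ga) = -\int_{\Ga(\X)} F(\ga)\int_{\Lambda^c}(V^-(\ga\cup x,x) - V^+(\ga,x))\,dx\,\pi(d\ga),
\]
so the two $\Lambda^c$-contributions cancel exactly when integrated against $F$. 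Hence the full-$\X$ bracket produces the same integral against every bounded local $F$, and since such $F$ are dense in $L^1(\Ga(\X),\pi)$, it may be identified with $\operatorname{Div}V$ (up to the sign prescribed by the defining relation), yielding \eqref{vcyrstwu56}. The main obstacle is integrability: in isolation, neither $\sum_{x\in\ga}V^+(\ga\setminus x,x)$ nor $\int_\X V^-(\ga\cup x,x)\,dx$ need converge, and it is precisely hypothesis \eqref{zrweitr} that makes the combined expression $\pi$-a.s.\ absolutely convergent; consequently the two $\Lambda^c$-pieces must be paired before letting $\Lambda\uparrow\X$, rather than extending each sum or integral separately.
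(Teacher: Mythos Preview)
Your argument is correct and follows essentially the same route as the paper: two applications of the Mecke identity, one turning a $\gamma$-sum into an $x$-integral and one in the reverse direction. The only difference is the order of grouping. The paper first applies Mecke to the $D^-$-part, then regroups the integrand according to whether $F$ is evaluated at $\gamma$ or at $\gamma\cup x$; this makes the combination $V^-(\gamma\cup x,x)-V^+(\gamma,x)$ appear immediately, so hypothesis \eqref{zrweitr} gives the full $\X$-integral and the second Mecke yields the full $\gamma$-sum in one stroke. Your splitting into $I_1,\dots,I_4$ keeps $V^+$ and $V^-$ separate (each only controlled on bounded $\Lambda$ by \eqref{utf7i}), producing a $\Lambda$-restricted bracket and necessitating the extra $\Lambda^c$-cancellation step; that step is correct but avoidable with the paper's grouping.
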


\begin{proof}
Using the Mecke identity \eqref{Georgii}, we have, for each measurable bounded local function $F$ on $\Gamma(\X)$,
\begin{align}
&\int_{\Ga(\X)} (D_V F)(\ga)\, \pi(d\ga)\notag\\
&=\int_{\Gamma(\X)}\int_{\X}\big(V^+(\gamma,x)(F(\gamma\cup x)-F(\gamma))+V^-(\gamma\cup x,x)(F(\gamma)-F(\gamma\cup x))dx\,\pi(d\gamma)\notag\\
&=\int_{\Gamma(\X)}\int_{\X} \big(V^-(\gamma\cup x,x)-V^+(\gamma,x))F(\gamma)\,dx\,\pi(d\gamma)\notag\\
&\quad+\int_{\Gamma(\X)}\int_{\X}\big(V^+(\gamma,x)-V^-(\gamma\cup x,x)\big)F(\gamma\cup x)\,dx\,\pi(d\gamma)\notag\\
&=\int_{\Gamma(\X)}\int_{\X} \big(V^-(\gamma\cup x,x)-V^+(\gamma,x))\,dx\,F(\gamma)\pi(d\gamma)\notag\\
&\quad+\int_{\Gamma(\X)}\sum_{x\in\gamma}\big(V^+(\gamma\setminus x,x)-V^-(\gamma ,x)\big)F(\gamma) \,\pi(d\gamma).\label{cfystu5}
\end{align}
Note that conditions \eqref{utf7i}, \eqref{zrweitr} justify the above calculations. Formula \eqref{cfystu5} implies the statement.
\end{proof}

Assume that the vector field $V$ satisfies the following symmetry relation:
\begin{equation}\label{cxryw4u6}
V^+(\ga,x) =  - V^- (\ga\cup x,x).
\end{equation}
In particular, relation \eqref{cxryw4u6} holds for the gradient $V=DF$ of a function  $F:\Gamma(\X)\to \R$.
Note that, if \eqref{cxryw4u6} holds, then the conditions \eqref{utf7i}, \eqref{zrweitr} become
$$V^+(\gamma,x)\in L^1(\Gamma(\X)\times\X ,d\pi(\gamma)\,dx).$$
Formulas \eqref{vcyrstwu56} and \eqref{cxryw4u6} imply 
\begin{equation}\label{cyrrde6ie}
(\operatorname{Div} V)(\ga)= 2 \sum_{x\in\ga} V^+(\ga\setminus x) -2\int_{\X} V^+(\ga,x) dx.
\end{equation}

Denote by $D(\Delta)$ the linear space of all measurable functions $F:\Gamma(\X)\to \R$ that satisfy 
$$F(\gamma\cup x)-F(\gamma)\in L^1(\Gamma(\X)\times\X ,d\pi(\gamma)\,dx).$$
Then, we define a linear operator $\Delta$ with domain $D(\Delta)$ by $\Delta F=\operatorname{Div}D F$. The operator $\Delta$ is called the finite difference Laplacian. It follows from \eqref{cyrrde6ie} that 
$$
(\Delta F) (\ga)= - 2\sum_{x\in \ga} \big(F(\ga\setminus x)-F(\ga)\big) -2\int_\X \big(F(\ga\cup x) -F(\ga)\big)dx.
$$

\begin{remark}
In our considerations, we used the Lebesgue measure on $\X$ as the fixed intensity measure of the Poisson measure. Replacing the Lebesgue measure by
an arbitrary diffuse Radon measure $\sigma(dx)$, we can immediately extend the above formulas to the case of the general measure~$\sigma$.
\end{remark}

\section{Finite difference operators and the canonical commutation relations}\label{cxdtsqdhqiu}

By formulas \eqref{vyrsw6u} and \eqref{NP}, we have for each $\xi\in\mathcal D(\X)$, $\xi>-1$, 
$$E_\xi(\gamma)=\sum_{n=0}^\infty \frac1{n!}\langle (\gamma)_n,\xi^{\otimes n}\rangle =e^{\langle\gamma,\ln (1+\xi)\rangle} ,\quad \gamma \in\Gamma(\X).$$
Hence, for $\psi \in C_0(\R^d)$,
$$
D^+_\psi E_\xi(\ga)=\int_{\X} \psi(x)\xi(x)\,dx\, E_\xi (\ga).
$$
This easily implies that
\begin{equation}\label{vfydst}
D^+_\psi \langle (\gamma)_n, f^{(n)}\rangle =n\int_{\X}\psi(x)\big\langle (\gamma)_{n-1}, f^{(n)}(x,\cdot)\big\rangle\,dx.\end{equation}
Here and below we assume that $f^{(n)}$ belongs to $C_0(\X)^{\odot n}$, the space of continuous symmetric functions on $(\X)^n$ with compact support.  Formula \eqref{vfydst} means that $D^+_\psi $ is a lowering operator for the falling factorials, cf.\ \cite[Subsection 6.1]{Umbral}. Formula \eqref{vfydst} also implies that, for each $x\in\X$,
\begin{equation}\label{vcyds6ue4}
D^+_x \langle (\gamma)_n, f^{(n)}\rangle =n \big\langle (\gamma)_{n-1}, f^{(n)}(x,\cdot)\big\rangle.
\end{equation}

\begin{proposition}\label{esawaq4wq}
Let $\psi\in C_0(\X)$ and $f^{(n)}\in C_0(\X)^{\odot n}$. 
Then
\begin{equation}\label{cxxdxdd}
D^-_\psi \langle (\gamma)_n, f^{(n)}\rangle =-\left\langle (\gamma)_n(dx_1\dotsm dx_n),\big(\psi(x_1)+\dots+\psi(x_n)\big) f^{(n)}(x_1,\dots,x_n)\right\rangle.
\end{equation}
\end{proposition}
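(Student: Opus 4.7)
The plan is to compute $D^-_\psi \langle(\gamma)_n, f^{(n)}\rangle$ directly from the definition of the death gradient together with the explicit ``distinct ordered tuples'' representation \eqref{vyrsw53hxf} of the falling factorial pairing. By definition,
$$D^-_\psi \langle(\gamma)_n, f^{(n)}\rangle = -\sum_{y\in\gamma}\psi(y)\bigl(\langle(\gamma)_n, f^{(n)}\rangle - \langle(\gamma\setminus y)_n, f^{(n)}\rangle\bigr),$$
so I would focus on evaluating the bracketed difference for each fixed $y\in\gamma$.

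Using \eqref{vyrsw53hxf}, each pairing $\langle(\gamma)_n, f^{(n)}\rangle$ is a sum of $f^{(n)}(x_1,\dots,x_n)$ over all ordered $n$-tuples of pairwise distinct points in $\gamma$, and $\langle(\gamma\setminus y)_n, f^{(n)}\rangle$ is the analogous sum restricted to tuples not containing $y$. Subtracting, the difference collects exactly those tuples that contain $y$ in some coordinate, so
$$\langle(\gamma)_n, f^{(n)}\rangle - \langle(\gamma\setminus y)_n, f^{(n)}\rangle = \sum_{j=1}^n \sum_{\substack{(x_1,\dots,x_n)\subset\gamma \\ \text{distinct},\ x_j=y}} f^{(n)}(x_1,\dots,x_n).$$
I would then multiply by $\psi(y)$ and exchange the sum over $y\in\gamma$ with the sum over $j$, using the fact that for each $j$ the pair (tuple, $y$) with $x_j=y$ is parametrised simply by the distinct tuple $(x_1,\dots,x_n)$ with $y=x_j$. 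This yields
$$\sum_{y\in\gamma}\psi(y)\bigl(\langle(\gamma)_n, f^{(n)}\rangle - \langle(\gamma\setminus y)_n, f^{(n)}\rangle\bigr) = \sum_{j=1}^n \sum_{\substack{(x_1,\dots,x_n)\subset\gamma \\ \text{distinct}}} \psi(x_j)\,f^{(n)}(x_1,\dots,x_n),$$
which by \eqref{vyrsw53hxf} is precisely $\langle(\gamma)_n, (\psi(x_1)+\dots+\psi(x_n))f^{(n)}\rangle$. Reinstating the minus sign gives \eqref{cxxdxdd}.

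There is no genuine obstacle: compact support of $\psi$ and $f^{(n)}$ ensures that only finitely many terms in $\gamma$ contribute, so all the sum interchanges are trivially finite and require no convergence argument. The only subtlety worth a line of justification is that the factor $\psi(y)f^{(n)}(x_1,\dots,x_n)$ in the intermediate expression is not symmetric in $(y,x_1,\dots,x_n)$, but since $y$ is forced to equal one of the $x_j$'s, we stay within an order-$n$ pairing rather than an order-$(n+1)$ one, and Lemma~\ref{crdstj6sw5a} is not needed here — the combinatorial re-indexing does all the work.
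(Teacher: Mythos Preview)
Your proof is correct, and it takes a genuinely different route from the paper. The paper first uses the lowering-operator identity \eqref{vcyds6ue4} for $D^+_x$, applies it at $\gamma\setminus x$ to obtain the pointwise formula $D^-_x\langle(\gamma)_n,f^{(n)}\rangle=-n\langle(\gamma\setminus x)_{n-1},f^{(n)}(x,\cdot)\rangle$ (equation \eqref{fxdsarwaeersw}), and then invokes Lemma~\ref{crdstj6sw5a} to collapse the resulting sum $\sum_{x\in\gamma}\langle(\gamma\setminus x)_{n-1},\psi(x)f^{(n)}(x,\cdot)\rangle$ into an $n$-point pairing. You bypass both of these ingredients by working directly with the distinct-tuple representation \eqref{vyrsw53hxf} and re-indexing. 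Your argument is more elementary and fully self-contained; the paper's route, on the other hand, isolates the intermediate identity \eqref{fxdsarwaeersw}, which is reused later in the proof of Theorem~\ref{thm:sasaf}, and makes the $D^+$/$D^-$ duality explicit.
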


\begin{proof}Fix $\gamma\in\Gamma(\X)$ and $x\in\X$. Then \eqref{vcyds6ue4} implies
$$D^+_x \langle (\gamma\setminus x)_n, f^{(n)}\rangle =n \big\langle (\gamma\setminus x)_{n-1}, f^{(n)}(x,\cdot)\big\rangle,$$
from where
\begin{equation}\label{fxdsarwaeersw}
D^-_x \langle (\gamma)_n, f^{(n)}\rangle=-n\big\langle (\gamma\setminus x)_{n-1}, f^{(n)}(x,\cdot)\big\rangle,\end{equation}
and so
\begin{equation}\label{xtrea5yw}
D^-_\psi\,\langle (\gamma)_n, f^{(n)}\rangle=-n\sum_{x\in\gamma} \big\langle (\gamma\setminus x)_{n-1},\psi(x) f^{(n)}(x,\cdot)\big\rangle.\end{equation}
By Lemma~\ref{crdstj6sw5a}, the right-hand side of \eqref{xtrea5yw} is equal to the right-hand side of \eqref{cxxdxdd}.
\end{proof}

\begin{corollary}Let $f^{(n)}\in C_0(\X)^{\odot n}$ and  $F(\gamma)=\langle(\gamma)_n, f^{(n)}\rangle$. Then
\begin{equation*}
\sum_{x\in\gamma}(D_x^-F)(\gamma)=-nF(\gamma).
\end{equation*}
\end{corollary}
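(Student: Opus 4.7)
The plan is to read the corollary as a direct consequence of the computation already carried out inside the proof of Proposition~\ref{esawaq4wq}, specifically formula \eqref{fxdsarwaeersw}, together with one application of Lemma~\ref{crdstj6sw5a}. There is essentially no new analytic content; the point is a bookkeeping identity, so the main task is to choose the shortest route and make sure the compact-support hypotheses match up.

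First, I would fix $\gamma\in\Gamma(\X)$ and $x\in\X$. By \eqref{fxdsarwaeersw} we already have the pointwise formula
\begin{equation*}
(D_x^- F)(\gamma) = D_x^- \langle (\gamma)_n, f^{(n)}\rangle = -n\,\langle (\gamma\setminus x)_{n-1}, f^{(n)}(x,\cdot)\rangle.
\end{equation*}
Summing this over $x\in\gamma$ (the sum is in fact finite, since $f^{(n)}$ has compact support, so only finitely many $x\in\gamma$ contribute) gives
\begin{equation*}
\sum_{x\in\gamma} (D_x^- F)(\gamma) = -n\sum_{x\in\gamma} \langle (\gamma\setminus x)_{n-1}, f^{(n)}(x,\cdot)\rangle.
\end{equation*}
The right-hand side is precisely the left-hand side of the identity in Lemma~\ref{crdstj6sw5a} applied to the function $f^{(n)}$, viewed now as a function on $(\X)^n = \X \times (\X)^{n-1}$ with compact support. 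That lemma identifies it with $-n\langle (\gamma)_n, f^{(n)}\rangle = -n F(\gamma)$, finishing the argument.

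An equivalent route would be to invoke Proposition~\ref{esawaq4wq} directly with a test function $\psi\in C_0(\X)$ chosen to equal $1$ on a compact set containing the support of $f^{(n)}$: since $f^{(n)}(x_1,\dots,x_n)$ vanishes unless every $x_i$ lies in that compact set, the factor $\psi(x_1)+\dots+\psi(x_n)$ evaluates to $n$ on the support of $f^{(n)}$, and the left-hand side of \eqref{cxxdxdd} coincides with $\sum_{x\in\gamma}(D_x^- F)(\gamma)$ for the same reason. Either way, the argument is immediate, and the only thing requiring a moment's care is the compact-support bookkeeping that lets the sums be manipulated freely; I do not expect a genuine obstacle.
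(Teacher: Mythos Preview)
Your proposal is correct and matches the paper's own argument: the paper simply says the corollary follows from Proposition~\ref{esawaq4wq} once one notes that the proposition remains valid for the constant function $\psi\equiv 1$, which is exactly your second route (and your first route just unpacks that observation via \eqref{fxdsarwaeersw} and Lemma~\ref{crdstj6sw5a}).
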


\begin{proof}This follows immediately from Proposition \ref{esawaq4wq} if we notice that it remains true in the case where the function $\psi(x)=1$ for all $x\in\X$.
\end{proof}

Formulas \eqref{vfydst} and \eqref{cxxdxdd} allow us to find a connection between the difference derivatives and creation, annihilation and neutral operators in the
symmetric Fock space. To this end let us first recall the definition of the symmetric Fock space. For a real separable Hilbert space $\mathcal H$, the Fock space $\mathcal F(\mathcal H)$ is defined as the real Hilbert space
$$\mathcal F(\mathcal H)=\bigoplus_{n=0}^\infty\mathcal F_n(\mathcal H).$$
Here $\mathcal F_0(\mathcal H)=\R$ and for $n\ge1$, $\mathcal F_n(\mathcal H)=\mathcal H^{\odot n}n!$, i.e., $\mathcal F_n(\mathcal H)$ coincides as a set with $\mathcal H^{\odot n}$, the $n$th symmetric tensor power of $\mathcal H$, and 
$$\|f^{(n)}\|_{\mathcal F_n(\mathcal H)}^2=\|f^{(n)}\|^2_{\mathcal H^{\odot n}}\,n!.$$

Consider the Fock space $\mathcal F(L^2(\X,dx))$. Note that $L^2(\X,dx)^{\odot n}$ is the space of all square integrable symmetric functions  $f^{(n)}:(\X)^n\to\R$. Let $\mathcal F_\mathrm{fin}(C_0(\X))$ denote the dense subset of $\mathcal F(L^2(\X,dx))$ that consists of finite sequences 
\begin{equation}
  f=(f^{(0)},f^{(1)},\dots,f^{(N)},0,0,\dots), \label{finseq}
\end{equation}
where $N\in\mathbb N$ and each $f^{(n)}\in C_0(\X)^{\odot n}$.

For each $\psi\in C_0(\X)$, we define a creation operator $A^+(\psi)$, an annihilation operator $A^-(\psi)$ and a neutral operator $A^0(\psi)$ as follows. These operators act in $\mathcal F_\mathrm{fin}(C_0(\X))$ and for each $f^{(n)}\in C_0(\X)^{\odot n}$, we have $A^+(\psi)f^{(n)}\in C_0(\X)^{\odot (n+1)}$, $A^-(\psi)f^{(n)}\in C_0(\X)^{\odot (n-1)}$, $A^0(\psi)f^{(n)}\in C_0(\X)^{\odot n}$. Furthermore,
\begin{align}
 A^+(\psi)f^{(n)}:&=\psi\odot f^{(n)},\label{yudfy7ed}\\
  \big(A^-(\psi)f^{(n)}\big)(x_1,\dots,x_{n-1}):&=n\int_\X \psi(x)f^{(n)}(x,x_1,\dots,x_{n-1})dx,\label{xsesese}\\
  \big(A^0(\psi)f^{(n)}\big)(x_1,\dots,x_{n}):&=(\psi(x_1)+\dots+\psi(x_n))f^{(n)}(x_1,\dots,x_n).\label{vtudy}
 \end{align}
 The operator $A^-(\psi)$ is the restriction to $\mathcal F_\mathrm{fin}(C_0(\X))$ of the adjoint operator of $A^+(\psi)$ in $\mathcal F(L^2(\X,dx))$, whereas the operator $A^0(\psi)$ is symmetric in $\mathcal F(L^2(\X,dx))$. These operators satisfy the canonical commutation relations: for each $\psi,\xi\in C_0(\X)$,
 \begin{gather}
 [A^+(\psi),A^+(\xi)]=[A^-(\psi),A^-(\xi)]=[A^0(\psi),A^0(\xi)]=0,\notag\\
 [A^-(\psi),A^+(\xi)]=\int_\X \psi(x)\xi(x)\,dx,\notag\\
 [A^+(\psi),A^0(\xi)]=-A^+(\psi\xi),\quad  [A^-(\psi),A^0(\xi)]=A^-(\psi\xi).\label{dr6i}
 \end{gather} 
Here, for linear operators $A$ and $B$, $[A,B]=AB-BA$ is the commutator of $A$ and $B$.

A polynomial on $\Gamma(\X)$ is a function $p:\Gamma(X)\to\R$ of the form
$$p(\gamma)=g^{(0)}+\sum_{n=1}^N \langle\gamma^{\otimes n},g^{(n)}\rangle, \quad \gamma\in\Gamma(\X),$$
where $g^{(0)}\in\R$, $N\in\mathbb N$, and each $g^{(n)}\in C_0(\X)^{\odot n}$.  
We denote by $\mathcal P(\Gamma(\X))$ the space of polynomials on $\Gamma(\X)$.

The following lemma can be easily deduced from \cite[Theorem 4.1]{Stirling}.

\begin{lemma} \label{vfdrye64}The space $\mathcal P(\Gamma(\X))$ consists of all functions of the form
\begin{equation}\label{vcytswu6453e6}
p(\gamma)=f^{(0)}+\sum_{n=1}^N \langle(\gamma)_n ,f^{(n)}\rangle, \quad \gamma\in\Gamma(\X),
\end{equation}
where $f^{(0)}\in\R$, $N\in\mathbb N$, and each $f^{(n)}\in C_0(\X)^{\odot n}$. 
\end{lemma}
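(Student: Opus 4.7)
The plan is to prove two inclusions: (i) every classical monomial $\langle \gamma^{\otimes n}, g^{(n)}\rangle$ with $g^{(n)}\in C_0(\X)^{\odot n}$ can be rewritten in the form of the right-hand side of \eqref{vcytswu6453e6}; and, conversely, (ii) every function of the form \eqref{vcytswu6453e6} belongs to $\mathcal{P}(\Gamma(\X))$. Both amount to a change-of-basis between two natural sequences of ``monomials'' on $\Gamma(\X)$, the classical powers and the falling factorials, indexed by the set partitions of $\{1,\dots,n\}$.

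For direction (i), I would expand
$$\langle \gamma^{\otimes n}, g^{(n)}\rangle=\sum_{y_1,\dots,y_n\in\gamma} g^{(n)}(y_1,\dots,y_n)$$
and split the sum according to the coincidence pattern of the indices. For a set partition $\pi=\{B_1,\dots,B_k\}$ of $\{1,\dots,n\}$, restricting the sum to those tuples $(y_1,\dots,y_n)$ with $y_i=y_j$ if and only if $i$ and $j$ lie in the same block gives the contribution $\langle (\gamma)_k, g^{(n)}_\pi\rangle$, where $g^{(n)}_\pi\in C_0(\X)^{\odot k}$ is the symmetrization of the kernel obtained from $g^{(n)}$ by identifying all variables within each block. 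Summing over all set partitions $\pi$ of $\{1,\dots,n\}$ exhausts $\gamma^n$ exactly once and therefore yields an expression of the required form.

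For direction (ii), I would use the explicit formula \eqref{cxrw435q}: expanding the product
$$(\gamma)_n(x_1,\dots,x_n)=\gamma(x_1)\bigl(\gamma(x_2)-\delta_{x_1}(x_2)\bigr)\dotsm \Bigl(\gamma(x_n)-\sum_{j<n}\delta_{x_j}(x_n)\Bigr)$$
shows that $(\gamma)_n$ equals $\gamma^{\otimes n}$ plus a linear combination of generalized kernels, each of which sends a continuous test function $f^{(n)}$ with compact support to an expression of the form $\langle \gamma^{\otimes m}, h^{(m)}\rangle$ for some $m<n$ and some $h^{(m)}\in C_0(\X)^{\odot m}$ obtained by integrating out coincidence variables against the delta functions. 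Hence by induction on $n$, every $\langle(\gamma)_n, f^{(n)}\rangle$ is a classical polynomial in $\gamma$.

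The main obstacle is purely combinatorial bookkeeping: in direction (i) one must verify that the partitioned sums exhaust the index set $\gamma^n$ exactly once and that each $g^{(n)}_\pi$ inherits symmetry and compact support from $g^{(n)}$; in direction (ii) one must track how the delta functions produce kernels $h^{(m)}$ that again lie in $C_0(\X)^{\odot m}$. Both facts are elementary but tedious; \cite[Theorem 4.1]{Stirling} packages them into explicit ``Stirling-type'' transformation formulas expressing $\gamma^{\otimes n}$ in terms of $(\gamma)_k$ and vice versa, from which the present lemma is immediate.
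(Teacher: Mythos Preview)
Your argument is correct and is precisely the content the paper invokes: the paper does not spell out a proof but merely states that the lemma ``can be easily deduced from \cite[Theorem~4.1]{Stirling},'' and your two directions (i) and (ii) are exactly the Stirling-type change of basis between $\gamma^{\otimes n}$ and $(\gamma)_k$ encoded in that theorem. In other words, you have unpacked the citation; there is no methodological difference.
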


Lemma~\ref{vfdrye64} allows us to define a bijective map 
$$I:\mathcal P(\Gamma(\X))\to \mathcal F_\mathrm{fin}(C_0(\X))$$
 by defining, for  a polynomial $p$ given by \eqref{vcytswu6453e6},  $$Ip=(f^{(0)},f^{(1)},\dots,f^{(N)},0,0)\in\mathcal F_\mathrm{fin}(C_0(\X)).$$

\begin{remark}
Since the map $I$ is bijective and the set  $\mathcal F_\mathrm{fin}(C_0(\X))$ is dense in the Fock space $\mathcal F(L^2(\X,dx))$, we can therefore realize the latter Fock space as the closure of $\mathcal P(\Gamma(\X))$ in the Hilbertian norm
$$\|p\|=\bigg(|f^{(0)}|^2+\sum_{n=1}^\infty\|f^{(n)}\|^2_{L^2(\X,dx)^{\odot n}}\,n!\bigg)^{1/2}.$$
\end{remark}

\begin{proposition}\label{dtre564}
For each $\psi\in C_0(\X)$, we have
\begin{align}
I^{-1}A^+(\psi)I&=\langle\cdot,\psi\rangle+D^-_\psi,\label{dtew5u}\\
I^{-1}A^-(\psi)I&=D^+_\psi,\label{ew5u}\\
I^{-1}A^0(\psi)I&=-D^-_\psi. \label{cutfC}
\end{align}
\end{proposition}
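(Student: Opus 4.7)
The plan is to verify each of the three identities by applying both sides to a generic basis polynomial of the form $p(\gamma)=\langle(\gamma)_n,f^{(n)}\rangle$ with $f^{(n)}\in C_0(\X)^{\odot n}$, since such monomials linearly span $\mathcal P(\Gamma(\X))$ by Lemma~\ref{vfdrye64}, and because $Ip$ then equals the single-level Fock vector with $f^{(n)}$ in the $n$th slot. Applying $A^\pm(\psi)$ and $A^0(\psi)$ to that vector and translating back through $I^{-1}$ should match the three finite difference operators on $p$.

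The identities \eqref{ew5u} and \eqref{cutfC} are essentially direct quotations of results already proved. For \eqref{ew5u}, formula \eqref{vfydst} shows that $D^+_\psi\langle(\gamma)_n,f^{(n)}\rangle = \langle(\gamma)_{n-1},g^{(n-1)}\rangle$, where $g^{(n-1)}(x_1,\dots,x_{n-1})=n\int_\X\psi(x)f^{(n)}(x,x_1,\dots,x_{n-1})\,dx$ is exactly $A^-(\psi)f^{(n)}$ by the defining formula \eqref{xsesese}. For \eqref{cutfC}, Proposition~\ref{esawaq4wq} yields $D^-_\psi\langle(\gamma)_n,f^{(n)}\rangle = -\langle(\gamma)_n,(\psi(x_1)+\dots+\psi(x_n))f^{(n)}\rangle$, and the symmetric kernel in the pairing is $A^0(\psi)f^{(n)}$ by \eqref{vtudy}, giving a minus sign on the right-hand side as claimed.

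The only real work lies in \eqref{dtew5u}, where I need the product formula
$$
\langle\gamma,\psi\rangle\,\langle(\gamma)_n,f^{(n)}\rangle = \langle(\gamma)_{n+1},\psi\odot f^{(n)}\rangle + \langle(\gamma)_n,(\psi(x_1)+\dots+\psi(x_n))f^{(n)}\rangle.
$$
I will prove this by writing both factors as explicit sums over $\gamma$ using representation \eqref{vyrsw53hxf}: the left-hand side becomes $\sum_{y\in\gamma}\sum_{(x_1,\dots,x_n)\text{ distinct in }\gamma}\psi(y)f^{(n)}(x_1,\dots,x_n)$, which I split according to whether $y\notin\{x_1,\dots,x_n\}$ or $y=x_i$ for some $i$. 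The first case produces a sum over ordered $(n+1)$-tuples of distinct points in $\gamma$ and, using symmetry of $(\gamma)_{n+1}$, equals $\langle(\gamma)_{n+1},\psi\otimes f^{(n)}\rangle=\langle(\gamma)_{n+1},\psi\odot f^{(n)}\rangle$; the second case contributes exactly $\langle(\gamma)_n,(\psi(x_1)+\dots+\psi(x_n))f^{(n)}\rangle$.

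Combining this product formula with the identity for $D^-_\psi$ from Proposition~\ref{esawaq4wq}, the neutral-piece terms cancel and I obtain
$$
(\langle\cdot,\psi\rangle + D^-_\psi)\langle(\gamma)_n,f^{(n)}\rangle = \langle(\gamma)_{n+1},\psi\odot f^{(n)}\rangle = \langle(\gamma)_{n+1},A^+(\psi)f^{(n)}\rangle
$$
by \eqref{yudfy7ed}, which is precisely $I^{-1}A^+(\psi)Ip$. The main obstacle, and the only nontrivial combinatorial step, is the product decomposition above; the rest amounts to matching the kernels appearing in the three previously established formulas against the definitions \eqref{yudfy7ed}--\eqref{vtudy}.
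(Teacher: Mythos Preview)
Your proof is correct and follows essentially the same route as the paper's: both derive \eqref{ew5u} from \eqref{vfydst} and \eqref{xsesese}, derive \eqref{cutfC} from Proposition~\ref{esawaq4wq} and \eqref{vtudy}, and then obtain \eqref{dtew5u} by combining \eqref{cutfC} with the product (recurrence) formula $\langle\gamma,\psi\rangle\langle(\gamma)_n,f^{(n)}\rangle=\langle(\gamma)_{n+1},A^+(\psi)f^{(n)}\rangle+\langle(\gamma)_n,A^0(\psi)f^{(n)}\rangle$. The only difference is that the paper simply cites this recurrence from \cite{Umbral} (formula~(5.26) there), whereas you supply a direct combinatorial proof via the explicit representation \eqref{vyrsw53hxf}; your version is thus slightly more self-contained but otherwise identical in structure.
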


\begin{proof}
Formula \eqref{ew5u} follows from \eqref{vfydst} and \eqref{xsesese}. Formula \eqref{cutfC} follows from \eqref{cxxdxdd} and \eqref{vtudy}. Finally, formula \eqref{dtew5u} follows from \eqref{cutfC}  and the recurrence formula
\begin{equation}\label{vfdr6r5}
\langle\gamma,\psi\rangle\langle (\gamma)_n,f^{(n)}\rangle=\langle (\gamma)_{n+1},A^+(\psi)f^{(n)}\rangle+\langle (\gamma)_{n},A^0(\psi)f^{(n)}\rangle,\end{equation}
see e.g.\ formula (5.26) in \cite{Umbral}.
\end{proof}

\begin{remark}
Using Proposition \ref{dtre564} and the canonical commutation relations \eqref{dr6i}, one can easily show that the unital algebra generated by the operators $D^+_\psi$, $D^-_\psi$ and $\langle\cdot,\psi\rangle$ is closed under the commutator $[\cdot,\cdot]$ on it.
\end{remark}

\section{Finite difference Markov generators}\label{yutfcr}
In this section we will consider how certain Markov difference generators act on the falling factorials. 

Let $a\in C_0(\X)^{\odot 2}$ and $m\in\mathbb{N}$. For each $x\in\R^d$, we define $\alpha_x(\cdot):=a(x,\cdot)\in C_0(\R^d)$ and
\begin{equation}\label{rate_c}
c(x,\ga):=\langle (\ga)_m,\alpha_x^{\otimes m}\rangle. 
\end{equation}
For a local function $F:\Gamma(\X)\to\R$, we define the so-called death and birth generators 
\begin{align}
  (L^-F)(\gamma)&=\sum_{x\in\gamma}c(x,\ga\setminus x)(D^-_xF)(\gamma),\label{Lminus}\\
  (L^+F)(\gamma)&=\int_{\R^d}\,c(x,\ga)\,(D^+_xF)(\gamma)\,dx. \label{Lplus}
\end{align}
In other words,
\[
  L^\pm F = D^\pm_{V^\pm}F,
\]
where $V^+(\ga,x)=c(x,\ga)$ and $V^-(\ga,x)=c(x,\ga\setminus x)$.

We start with the following generalization of formula \eqref{vfdr6r5}.
\begin{lemma}
  Let $f^{(n)}\in C_0(\X)^{\odot n}$ and $\psi\in C_0(\R^d)$. We define, for each $1\leq i\leq n$, the operator $B_i(\psi)$ acting in $\mathcal F_\mathrm{fin}(C_0(\X))$ and given by
  \begin{equation}\label{Binew}
    \bigl(B_i(\psi) f^{(n)}\bigr)(x_1,\ldots,x_n)
    :=\biggl(\sum_{\substack{I\subset\{1,\ldots,n\}\\|I|=i}}\psi^{\otimes i}(x_I)\biggr)f^{(n)}(x_1,\ldots,x_n),
  \end{equation}
  where, for any $I\subset \mathbb{N}$, $x_I:=(x_i)_{i\in I}$. (In particular, $B_1(\psi) =A^0(\psi) $.) We set also  $B_0(\psi) f^{(n)}:=f^{(n)}$. Then 
\begin{equation}\label{prodasff}
  \langle (\ga)_n,f^{(n)}\rangle\langle (\ga)_m,\psi^{\otimes m}\rangle
  = \sum_{k=(m-n)_+}^m  \frac{m!}{k!} \bigl\langle (\ga)_{n+k},(B_{m-k}(\psi)f^{(n)})\odot \psi^{\otimes k} \bigr\rangle,
\end{equation}
where $s_+:=\max\{s,0\}$, $s\in\R$.
\end{lemma}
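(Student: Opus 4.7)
The plan is to prove \eqref{prodasff} by a direct combinatorial expansion based on formula \eqref{vyrsw53hxf}. Writing each factor on the left-hand side as a sum over ordered tuples of pairwise distinct points of $\gamma$ yields
\[
\langle(\gamma)_n,f^{(n)}\rangle\langle(\gamma)_m,\psi^{\otimes m}\rangle=\sum_{\mathbf x}\sum_{\mathbf y} f^{(n)}(x_1,\ldots,x_n)\psi(y_1)\cdots\psi(y_m),
\]
where $\mathbf x=(x_1,\ldots,x_n)$ and $\mathbf y=(y_1,\ldots,y_m)$ range over such tuples in $\gamma$. I would then classify each pair $(\mathbf x,\mathbf y)$ by the overlap of $\mathbf y$ with $\mathbf x$: set $J:=\{j\in\{1,\ldots,m\}:y_j\notin\{x_1,\ldots,x_n\}\}$, $I:=\{1,\ldots,m\}\setminus J$, and let $\tau:I\to\{1,\ldots,n\}$ be the injection with $y_i=x_{\tau(i)}$. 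Putting $|J|=k$, injectivity of $\tau$ forces $|I|=m-k\le n$, so $k\ge(m-n)_+$.

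For fixed $(J,\tau)$, set $K:=\tau(I)$ (so $|K|=m-k$) and write $J=\{j_1<\cdots<j_k\}$. As $(\mathbf x,\mathbf y)$ varies over pairs with this overlap pattern, the concatenated tuple $\mathbf z:=(x_1,\ldots,x_n,y_{j_1},\ldots,y_{j_k})$ runs precisely over ordered $(n+k)$-tuples of pairwise distinct points of $\gamma$, and the summand becomes
\[
f^{(n)}(z_1,\ldots,z_n)\prod_{j\in K}\psi(z_j)\prod_{\ell=1}^k\psi(z_{n+\ell}),
\]
which depends on $(J,\tau)$ only through $K$. Applying \eqref{vyrsw53hxf} in reverse identifies the resulting $\mathbf z$-sum as a pairing against $(\gamma)_{n+k}$. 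A direct count shows that, for each fixed $K\subset\{1,\ldots,n\}$ with $|K|=m-k$, the number of pairs $(J,\tau)$ giving rise to it is $\binom{m}{k}(m-k)!=m!/k!$ (namely $\binom{m}{k}$ choices of $J$, and $(m-k)!$ bijections $\tau\colon I\to K$).

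Summing over such $K$ and using \eqref{Binew} gives
\[
\sum_{\substack{K\subset\{1,\ldots,n\}\\|K|=m-k}}\Bigl(\prod_{j\in K}\psi(z_j)\Bigr)f^{(n)}(z_1,\ldots,z_n)=\bigl(B_{m-k}(\psi)f^{(n)}\bigr)(z_1,\ldots,z_n),
\]
so the level-$k$ contribution equals $\frac{m!}{k!}\langle(\gamma)_{n+k},(B_{m-k}(\psi)f^{(n)})\otimes\psi^{\otimes k}\rangle$. Since $(\gamma)_{n+k}$ is symmetric by \eqref{vuydts}, the ordinary tensor product may be replaced by $\odot$; summing $k$ from $(m-n)_+$ to $m$ then yields \eqref{prodasff}. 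The principal obstacle is keeping the multiplicity count of $(J,\tau)$ pairs clean, after which the rest follows mechanically from \eqref{vyrsw53hxf} and the definition of $B_{m-k}(\psi)$.
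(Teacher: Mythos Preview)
Your proof is correct and follows a genuinely different route from the paper's. The paper does not expand the two falling factorials directly; instead it invokes the $K$-transform $(Kf)(\gamma)=\sum_n\frac1{n!}\langle(\gamma)_n,f^{(n)}\rangle$ and the identity $Kf\cdot Kg=K(f\star g)$ from \cite{InfKuna,Stirling}, where $\star$ is the combinatorial convolution
\[
(f\star g)^{(j)}(x_J)=\sum_{J_1\sqcup J_2\sqcup J_3=J}f^{(|J_1|+|J_2|)}(x_{J_1\cup J_2})\,g^{(|J_2|+|J_3|)}(x_{J_2\cup J_3}).
\]
The authors then read off the components of $f^{(n)}\star\psi^{\otimes m}$ by solving $|J_1|+|J_2|=n$, $|J_2|+|J_3|=m$ and identify the resulting sum with $B_{m-k}(\psi)f^{(n)}\odot\psi^{\otimes k}$. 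Your argument instead classifies pairs of ordered tuples by their overlap pattern $(J,\tau)$ and counts multiplicities directly, which is more elementary and entirely self-contained---no external product formula is needed beyond \eqref{vyrsw53hxf}. The paper's approach, in turn, is shorter once the $\star$-machinery is in place and makes the algebraic structure (that falling factorials linearise products) more transparent; it would also generalise more readily to products of two arbitrary falling factorials rather than one against a pure tensor power $\psi^{\otimes m}$. One small point worth making explicit in your write-up: formula \eqref{vyrsw53hxf} is stated in the paper for symmetric $f^{(n)}$, but since $(\gamma)_{n+k}$ is a symmetric Radon measure the pairing against the non-symmetric integrand $(B_{m-k}(\psi)f^{(n)})\otimes\psi^{\otimes k}$ is still given by the same ordered-tuple sum, which is what you use.
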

\begin{proof}
  
By \cite{InfKuna}, see also \cite{Stirling}, for any $f\in \mathcal F_\mathrm{fin}(C_0(\X))$ of the form \eqref{finseq}, one can define 
\begin{equation}\label{Ktr}
  (Kf)(\gamma) =\sum_{n}\frac{1}{n!}\langle(\gamma)_n,f^{(n)}\rangle.
\end{equation}
Then, for $f,g\in \mathcal F_\mathrm{fin}(C_0(\X))$, we have $Kf\cdot Kg = K(f\star g)$,
where, for $j\geq1$ and $J=\{1,\ldots,j\}$,
\begin{equation}\label{starconv}
  (f\star g)^{(j)}(x_J):=\sum_{J_1\sqcup J_2\sqcup J_3 = J}f^{(|J_1|+|J_2|)}(x_{J_1\cup J_2})g^{(|J_2|+|J_3|)}(x_{J_2\cup J_3}).
\end{equation}
Then, as usual, identifying $f^{(n)}$ with $(0,\ldots,0, f^{(n)},0,\ldots)\in \mathcal F_\mathrm{fin}(C_0(\X))$, we get
$\langle (\ga)_n,f^{(n)}\rangle  =n! \, (Kf^{(n)})(\ga)$, and hence,
\[
  \langle (\ga)_n,f^{(n)}\rangle \langle (\ga)_m,\psi^{\otimes m}\rangle=
  n!m!  (Kf^{(n)})(\gamma)(K\psi^{\otimes m})(\gamma)= n!m! K(f^{(n)}\star \psi^{\otimes m} )(\gamma).
\]
By \eqref{starconv}, the components of $ f^{(n)}\star\psi^{\otimes m}\in\mathcal F_\mathrm{fin}(C_0(\X))$ are non-zero if and only if $|J_1|+|J_2|=n$ and $|J_2|+|J_3|=m$. Denote $k:=|J_3|$, then $|J|=n+k$. Note that $k\leq m $ and $n\geq |J_2|=m-k$; hence, $k\geq m-n$, and therefore, $(m-n)_+\leq k\leq m$. Then, from \eqref{starconv},
\begin{equation*}
  f^{(n)}\star \psi^{\otimes m}=\sum_{k=(m-n)_+}^m \binom{n+k}{k} (B_{m-k}(\psi)f^{(n)})\odot \psi^{\otimes k},
\end{equation*}
and therefore, by \eqref{Ktr},
\begin{align*}
  &\quad 
  \langle (\ga)_n,f^{(n)}\rangle\langle (\ga)_m,\psi^{\otimes m}\rangle
  \\&= n!m!\sum_{k=(m-n)_+}^m  \frac{1}{(n+k)!}  \binom{n+k}{k} \bigl\langle (\ga)_{n+k},(B_{m-k}(\psi)f^{(n)})\odot \psi^{\otimes k} \bigr\rangle,
\end{align*}
that yields \eqref{prodasff}.
\end{proof}
\begin{theorem}\label{thm:sasaf}
  Let $F(\gamma)=\langle (\gamma)_n,f^{(n)}\rangle$ with $f^{(n)}\in C_0(\X)^{\odot n}$, and let $L^\pm$ be given by \eqref{rate_c}--\eqref{Lplus}. Then
  \[
    (L^-F)(\gamma)= \sum_{k=(m-n+1)_+}^m  \bigl\langle (\ga)_{n+k},h_{m-k}^{(n+k)}\bigr\rangle,
  \]
  where, for $A:=\{x_1,\ldots,x_{n+k}\}$,
  \begin{align*}
    h_{m-k}^{(n+k)}(x_A):=-\frac{m!n!}{(n+k)!}\sum_{j=1}^{n+k}\sum_{\substack{I\subset A\setminus\{j\}\\|I|=k}}\sum_{\substack{J\subset (A\setminus I)\setminus\{j\}\\|J|=m-k}}\prod_{i\in I\cup J}a(x_j,x_i)f^{(n)}(x_{A\setminus I}),
  \end{align*}
and
\[
  (L^+F)(\gamma)= \sum_{k=(m-n+1)_+}^m  \Bigl\langle (\ga)_{n-1+k},\int_{\R^d}p_{m-k}^{(n-1+k)}(x,\cdot)dx \Bigr\rangle,
\]
where, for $A':=\{1,\ldots,n-1+k\}$,
\begin{align*}
  p_{m-k}^{(n-1+k)}(x,x_{A'})  =  \frac{m!n!}{(n-1+k)!} \sum_{\substack{I\subset A'\\|I|=k}}\sum_{\substack{J\subset A'\setminus I\\|J|=m-k}}\prod_{i\in I\cup J}a(x,x_i) f^{(n)}(x,x_{A'\setminus I}).
\end{align*}
\end{theorem}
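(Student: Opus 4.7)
The plan is to reduce both formulas to the product identity \eqref{prodasff} via the lowering relations \eqref{vcyds6ue4} and \eqref{fxdsarwaeersw}; for $L^-$ we additionally invoke Lemma~\ref{crdstj6sw5a} to absorb an external sum over $x\in\gamma$ into a single falling-factorial pairing. Using $c(x,\gamma)=\langle(\gamma)_m,\alpha_x^{\otimes m}\rangle$ and the lowering identities, the generators applied to $F(\gamma)=\langle(\gamma)_n,f^{(n)}\rangle$ rewrite directly as
\begin{align*}
(L^+F)(\gamma)&=n\int_{\R^d}\langle(\gamma)_m,\alpha_x^{\otimes m}\rangle\,\langle(\gamma)_{n-1},f^{(n)}(x,\cdot)\rangle\,dx,\\
(L^-F)(\gamma)&=-n\sum_{x\in\gamma}\langle(\gamma\setminus x)_m,\alpha_x^{\otimes m}\rangle\,\langle(\gamma\setminus x)_{n-1},f^{(n)}(x,\cdot)\rangle.
\end{align*}
Symmetry of $f^{(n)}$ makes $f^{(n)}(x,\cdot)$ symmetric in its remaining $n-1$ arguments, so \eqref{prodasff} applies.

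Next I apply \eqref{prodasff} with $n$ replaced by $n-1$ and $\psi=\alpha_x$ to expand each product of falling-factorial pairings into a sum over $k=(m-n+1)_+,\ldots,m$ with prefactor $\frac{m!}{k!}$ and inner pairing of $(\gamma)_{n-1+k}$ (respectively $(\gamma\setminus x)_{n-1+k}$) against $(B_{m-k}(\alpha_x)f^{(n)}(x,\cdot))\odot\alpha_x^{\otimes k}$. For $L^+$ I pull the integral inside the pairing; expanding the symmetric tensor product via
\[
(g\odot h)(y_{A'}) = \frac{(n-1)!\,k!}{(n-1+k)!}\sum_{\substack{I\subset A' \\ |I|=k}} h(y_I)\,g(y_{A'\setminus I}),
\]
and substituting \eqref{Binew} for $B_{m-k}(\alpha_x)$ produces the double sum over $I$ and $J$ in the statement with overall coefficient $n\cdot\frac{m!}{k!}\cdot\frac{(n-1)!\,k!}{(n-1+k)!}=\frac{m!\,n!}{(n-1+k)!}$, yielding the claimed form of $p_{m-k}^{(n-1+k)}$.

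For $L^-$ I then apply Lemma~\ref{crdstj6sw5a} to rewrite $\sum_{x\in\gamma}\langle(\gamma\setminus x)_{n-1+k},G(x,\cdot)\rangle$ as $\langle(\gamma)_{n+k},\tilde G\rangle$, with $\tilde G$ the symmetrization of $G$ in all $n+k$ variables. Since $G(x,\cdot)$ is already symmetric in its last $n-1+k$ entries, the symmetrization collapses to $\tilde G(x_A)=\frac{1}{n+k}\sum_{j=1}^{n+k}G(x_j,x_{A\setminus\{j\}})$. Re-expanding $\odot$ and $B_{m-k}(\alpha_{x_j})$ as above produces the triple sum over $j$, $I\subset A\setminus\{j\}$ of size $k$, and $J\subset(A\setminus I)\setminus\{j\}$ of size $m-k$, with combinatorial coefficient $-n\cdot\frac{m!}{k!}\cdot\frac{(n-1)!\,k!}{(n+k)!}=-\frac{m!\,n!}{(n+k)!}$, matching $h_{m-k}^{(n+k)}$.

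The main obstacle is the combinatorial bookkeeping: carrying the normalization of $\odot$ fixed by \eqref{prodasff} consistently through the symmetrization from Lemma~\ref{crdstj6sw5a}, using that $x_j$ together with $x_{(A\setminus\{j\})\setminus I}$ reconstitutes $x_{A\setminus I}$ inside $f^{(n)}$, and confirming the lower bound $k\geq(m-n+1)_+$ from both \eqref{prodasff} (with $n\mapsto n-1$) and the constraints that $I$ and $J$ fit in the available index set. No individual step is conceptually difficult; the delicacy lies in keeping all factors of $(n+k)!$, $(n-1+k)!$, $k!$, and $n!$ straight.
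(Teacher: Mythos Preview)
Your proposal is correct and follows essentially the same route as the paper's proof: both start from the lowering identities \eqref{vcyds6ue4}, \eqref{fxdsarwaeersw}, apply the product formula \eqref{prodasff} with $n\mapsto n-1$ and $\psi=\alpha_x$, expand the symmetric tensor product and the operator $B_{m-k}(\alpha_x)$ explicitly, and in the $L^-$ case invoke Lemma~\ref{crdstj6sw5a} together with the symmetrization $\frac{1}{n+k}\sum_j$ to absorb the outer sum over $x\in\gamma$. Your coefficient tracking (the collapse $n\cdot\frac{m!}{k!}\cdot\frac{(n-1)!\,k!}{(n-1+k)!}=\frac{m!\,n!}{(n-1+k)!}$ for $L^+$, and the further factor $\frac{1}{n+k}$ for $L^-$) matches the paper exactly.
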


\begin{remark}
  Another important class of the rates $c(x,\gamma)$ in \eqref{Lminus}--\eqref{Lplus} is given by 
  \[
  c(x,\ga)=f(\langle\ga,a(x,\cdot)\rangle)
  \]
  with $a\in C_0(\R^d)^{\odot 2}$. If $f(x)=\sum_n c_n x^n$ is an analytic function, then $c(x,\ga)=\sum_n c_n \langle\ga,a(x,\cdot)\rangle^{n}$ converges pointwise. Next, each
  \[
  \langle\ga,a(x,\cdot)\rangle^{n}=\sum_{j=1}^{n}\langle
    (\ga)_{n}, \mathbf{S}(n,j) a(x,\cdot)^{\otimes j}\rangle,
  \]
  where $\mathbf{S}(n,j)$ is \emph{the Stirling operators of the second kind}, see \cite{Stirling}. 
Therefore, one can rewrite $L^\pm F$ in this case as well.
\end{remark}

\begin{proof}
By \eqref{fxdsarwaeersw} and \eqref{prodasff},
\begin{align*}
  c(x,\ga\setminus x)(D^-_xF)(\gamma) &=-n\langle (\ga\setminus x)_m,\alpha_x^{\otimes m}\rangle \bigl\langle (\gamma\setminus x)_{n-1}, f^{(n)}(x,\cdot)\bigr\rangle
  \\& =-n \sum_{k=(m-n+1)_+}^m  \frac{m!}{k!} \bigl\langle (\ga\setminus x)_{n-1+k},(B_{m-k}(\alpha_x)f^{(n)}(x,\cdot))\odot \alpha_x^{\otimes k}\bigr\rangle.
\end{align*}
Set $g_{x,m-k}^{(n-1)}(\cdot) :=B_{m-k}(\alpha_x)f^{(n)}(x,\cdot)\in C_0(\R^d)^{\odot (n-1)}$. Then, for $A':=\{1,\ldots,n-1+k\}$,
\begin{align}
  &\quad \bigl(g_{x,m-k}^{(n-1)}\odot \alpha_x^{\otimes k}\bigr)(x_{A'})\notag\\&=\frac{1}{(n-1+k)!}\sum_{\sigma\in S_{n-1+k}}g_{x,m-k}^{(n-1)}(x_{\sigma(1)},\ldots,x_{\sigma(n-1)})\alpha_x^{\otimes k}(x_{\sigma(n)},\ldots,x_{\sigma(n-1+k)})\notag\\
  &=\frac{(n-1)!k!}{(n-1+k)!}\sum_{\substack{I\subset A'\\|I|=k}}g_{x,m-k}^{(n-1)}(x_{A'\setminus I})\prod_{i\in I}a(x,x_i)\label{saqwenlk}
\end{align}
is a function symmetric in $n-1+k$ variables $x_{A'}$ and dependend on $x$. Denoting $x$ by $x_{n+k}$ and considering the symmetrization of the latter function in all variables $x_A$, where $A:=A'\cup\{n+k\}=\{1,\ldots,n+k\}$, by the same arguments, we will get the function
\begin{align*}
  &\quad \frac{(n-1+k)!}{(n+k)!}\sum_{j=1}^{n+k}\bigl(g_{x_j,m-k}^{(n-1)}\odot \alpha_{x_j}^{\otimes k}\bigr)(x_{A\setminus\{j\}})
  \\&=\frac{(n-1)!k!}{(n+k)!}\sum_{j=1}^{n+k}\sum_{\substack{I\subset A\setminus\{j\}\\|I|=k}}g_{x_j,m-k}^{(n-1)}(x_{(A\setminus\{j\})\setminus I})\prod_{i\in I}a(x_j,x_i).
\end{align*}

Then, by Lemma~\ref{crdstj6sw5a}, 
\begin{align*}
  (L^-F)(\gamma)&=-n \sum_{k=(m-n+1)_+}^m  \frac{m!}{k!} \sum_{x\in\ga}\bigl\langle (\ga\setminus x)_{n-1+k},(B_{m-k}(\alpha_x)f^{(n)}(x,\cdot))\odot \alpha_x^{\otimes k}\bigr\rangle\\
  &= \sum_{k=(m-n+1)_+}^m  \bigl\langle (\ga)_{n+k},h_{m-k}^{(n+k)}\bigr\rangle,
\end{align*}
where
\begin{align*}
  h_{m-k}^{(n+k)}(x_1,\ldots,x_{n+k}):=-\frac{m!n!}{(n+k)!}\sum_{j=1}^{n+k}\sum_{\substack{I\subset A\setminus\{j\}\\|I|=k}}g_{x_j,m-k}^{(n-1)}(x_{(A\setminus\{j\})\setminus I})\prod_{i\in I}a(x_j,x_i),
\end{align*}
and since
\begin{equation}\label{e3rqw253}
  g_{x_j,m-k}^{(n-1)}(x_{(A\setminus\{j\})\setminus I})  =\sum_{\substack{J\subset (A\setminus\{j\})\setminus I\\|J|=m-k}}\alpha_{x_j}^{\otimes (m-k)}(x_J)f^{(n)}(x_{A\setminus I}),
\end{equation}
we get the statement.

Similarly, by \eqref{vcyds6ue4}, \eqref{prodasff}, and \eqref{saqwenlk},
\begin{align*}
  c(x,\ga)(D^+_xF)(\gamma) &=n\langle (\ga)_m,\alpha_x^{\otimes m}\rangle \bigl\langle (\gamma)_{n-1}, f^{(n)}(x,\cdot)\bigr\rangle
  \\& =n \sum_{k=(m-n+1)_+}^m  \frac{m!}{k!} \bigl\langle (\ga)_{n-1+k},(B_{m-k}(\alpha_x)f^{(n)}(x,\cdot))\odot \alpha_x^{\otimes k}\bigr\rangle
  \\& = \sum_{k=(m-n+1)_+}^m  \frac{n!m!}{(n-1+k)!} \bigl\langle (\ga)_{n-1+k}, \sum_{\substack{I\subset A'\\|I|=k}}g_{x,m-k}^{(n-1)}(x_{A'\setminus I})\prod_{i\in I}a(x,x_i)\bigr\rangle,
\end{align*}
and applying \eqref{e3rqw253} with $A\setminus\{j\}$ replaced by $A'$, we get the statement.
\end{proof}
\begin{corollary}
Let $m=1$, i.e. $c(x,\gamma)=\sum\limits_{y\in\gamma}a(x,y)$. Then, in the conditions and notations of Theorem~\ref{thm:sasaf}, we have 
  \[
    (L^-F)(\gamma)= \bigl\langle (\ga)_{n},h_{1}^{(n)}\bigr\rangle+\bigl\langle (\ga)_{n+1},h_{0}^{(n+1)}\bigr\rangle,
  \]
where
\begin{align*}
  h_1^{(n)}(x_1,\ldots,x_{n})&= -\sum_{j=1}^{n}\sum_{i\neq j}a(x_j,x_i)f^{(n)}(x_1,\ldots,x_n),\\
  h_0^{(n+1)}(x_1,\ldots,x_{n+1})&=-
  \frac{1}{n+1}\sum_{j=1}^{n+1}\sum_{i\neq j }a(x_j,x_i)f^{(n)}(x_{\{1,\ldots,n+1\}\setminus \{i\}});
\end{align*}
and, for $n\geq2$,
\[
  (L^+F)(\gamma)=  \Bigl\langle (\ga)_{n-1},\int_{\R^d}p_{1}^{(n-1)}(x,\cdot)dx \Bigr\rangle+
  \Bigl\langle (\ga)_{n},\int_{\R^d}p_{0}^{(n)}(x,\cdot)dx \Bigr\rangle,
\]
where
\begin{align*}
  p_1^{(n-1)}(x,x_1,\ldots,x_{n-1})&=n\sum_{j=1}^{n-1}a(x,x_j)f^{(n)}(x,x_{\{1,\ldots, n-1\}}),\\
  p_{0}^{(n)}(x,x_1,\ldots,x_n)&= \sum_{i=1}^n a(x,x_i)f^{(n)}(x,x_{\{1,\ldots, n\}\setminus\{i\}}).
\end{align*}

\end{corollary}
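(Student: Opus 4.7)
The plan is to apply Theorem~\ref{thm:sasaf} with $m=1$ and simply enumerate the admissible values of the summation index $k$. With $m=1$, the lower bound $(m-n+1)_+ = (2-n)_+$ equals $0$ for $n\geq 2$ and $1$ for $n=1$, so the sum runs over $k\in\{0,1\}$ for $n\geq 2$ and reduces to $k=1$ for $n=1$. For $L^-F$ the corollary as stated is uniform in $n\geq 1$: the candidate $k=0$ kernel $h_1^{(n)}$ contains the inner sum $\sum_{i\neq j}$ with $i,j\in\{1,\ldots,n\}$, which is empty when $n=1$, so $h_1^{(1)}\equiv 0$ and the formula is consistent with the theorem. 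For $L^+F$ the $k=0$ contribution lives in the $(n-1)$-falling factorial, so we need $n\geq 2$ for it to appear, which explains the restriction in the statement.

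Given this setup, the proof reduces to direct substitution. The constraint $|I|+|J|=m=1$ forces exactly one of $I,J$ to be a singleton and the other to be empty, which collapses the general kernels $h_{m-k}^{(n+k)}$ and $p_{m-k}^{(n-1+k)}$ from Theorem~\ref{thm:sasaf} into the four explicit expressions in the corollary. The prefactor $\tfrac{m!\,n!}{(n+k)!}$ becomes $\tfrac{1}{n+1}$ when $k=1$ and $1$ when $k=0$ for $L^-F$; the prefactor $\tfrac{m!\,n!}{(n-1+k)!}$ becomes $1$ when $k=1$ and $n$ when $k=0$ for $L^+F$. The remaining product $\prod_{i\in I\cup J}a(x_j,x_i)$ (respectively $\prod_{i\in I\cup J}a(x,x_i)$) reduces to a single factor, and the argument list of $f^{(n)}$ is read off from $x_{A\setminus I}$ in the death case and from $(x,x_{A'\setminus I})$ in the birth case.

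The one place where care is needed is the bookkeeping of which variable plays the role of the source of the rate $a(\cdot,\cdot)$. For $L^-F$ the distinguished variable is $x_j$, the particle that dies, and it must appear as the first argument of $a$; for $L^+F$ it is the integration variable $x$, the location of the new particle. Matching these roles to the summations $\sum_j\sum_{i\neq j}$ (for $L^-$) and $\sum_j$ (for $L^+$) in the explicit formulas is the one step I would write out carefully. No analytic difficulty arises at any stage, and I do not foresee a genuine obstacle: the corollary is essentially a reading of Theorem~\ref{thm:sasaf} at $m=1$, and the only work is the finite combinatorial simplification just described.
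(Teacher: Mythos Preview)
Your proposal is correct and is exactly the intended argument: the paper states the corollary without proof, as it is an immediate specialization of Theorem~\ref{thm:sasaf} at $m=1$, and your enumeration of $k\in\{0,1\}$ together with the collapse of the $I,J$ sums (forced by $|I|+|J|=1$) reproduces the four kernels verbatim. Your handling of the edge case $n=1$ (where $h_1^{(1)}\equiv 0$ because $\sum_{i\neq j}$ is empty, and where the $k=0$ term for $L^+$ is absent since $(2-n)_+=1$) is also correct.
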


We can also consider the Markov generator of a jump dynamics:
\[
(LF)(\gamma):=\int_{\R^d}\sum_{x\in\ga} \tilde{c}(x,y,\ga\setminus x)\big(F((\gamma\setminus x)\cup y)-F(\gamma)\big)dy,
\]
where $\tilde{c}(x,y,\ga):=\langle (\ga)_m,\beta_{x,y}^{\otimes m}\rangle$ and $\beta_{x,y}(z):=b(x,y,z)$ for some $b\in C_0(\R^d)^{\odot 3}$.
Since
\[
  F((\gamma\setminus x)\cup y)-F(\gamma)=(D^+_yF)(\gamma\setminus x)+(D_x^-F)(\gamma),
\]
this case can be done similarly to the previous one. 

For example, for $\tilde{c}(x,y,\gamma)=a(x,y)$, we will have, for $F(\gamma)=\langle (\gamma)_n,f^{(n)}\rangle$, 
\[
  F((\gamma\setminus x)\cup y)-F(\gamma)=n\langle (\gamma\setminus x)_{n-1},f^{(n)}(y,\cdot)-f^{(n)}(x,\cdot)\rangle,
\] 
and hence, $(LF)(\gamma)=\langle(\gamma)_{n},g^{(n)}\rangle$,
where
$$
g^{(n)}(x_1,\dots,x_n)=\sum_{i=1}^n \int_\X a(x_i-y)\big(f^{(n)}(y,x_{\{1,\ldots, n\}\setminus\{i\}})-f^{(n)}(x_1,\dots,x_n)\big)\,dy.
$$

\section{Spaces of Newton series}\label{vyei67}

Recall that each polynomial $p\in\mathcal P(\Gamma(\X))$ has a representation \eqref{vcytswu6453e6} through the falling factorials. In fact, \eqref{vcyds6ue4} and  \cite[Proposition~4.6]{Umbral} imply that, in formula \eqref{vcytswu6453e6}, 
\begin{gather*}
f^{(0)}=p(0),\\
f^{(n)}(x_1,\dots,x_n)=\frac1{n!}\,\big(D_{x_1}^+\dotsm D_{x_n}^+\,p\big)(0),\quad n\ge1.
\end{gather*}
Hence, by analogy with the one-dimensional case, formula \eqref{vcytswu6453e6} can be thought of as the (finite) Newton series of the polynomial $p$.

In  classical finite difference calculus,   study of Newton series is an important and 
highly non-trivial part of the theory, see e.g.\ \cite{Gelfond}. In the infinite dimensional setting, 
spaces of functions on $\Ga(\X)$ represented through their (generally speaking infinite) Newton series appear to be also very useful.   In this section, we discuss two constructions of   such spaces. 

\subsection{Banach spaces}\label{ftayfkt}

Let $q>0$. For $n\in\mathbb N$, we denote by $\mathcal F_n(L^1(\R^d,q\,dx))$ the subspace of 
$$\textstyle L^1((\R^d)^n,\frac{1}{n!}\,q^n\,dx_1\dotsm dx_n)$$ that consists of all functions  from the latter space that are a.e.\ symmetric. Denote also $\mathcal F_0(L^1(\R^d,q\,dx)):=\R$. We then define a Banach space $\mathcal F(L^1(\R^d,q\,dx))$ as follows: its elements are all infinite sequences $f=(f^{(n)})_{n=0}^\infty$, where $f^{(n)}\in \mathcal F_n(L^1(\R^d,q\,dx))$ and
$$\|f\|_{\mathcal F(L^1(\R^d,q\,dx))}:=\sum_{n=0}^\infty\|f^{(n)}\|_{\mathcal F_n(L^1(\R^d,q\,dx))}<\infty.$$

Each sequence $f=(f^{(n)})_{n=0}^\infty\in \mathcal F(L^1(\R^d,q\,dx))$ determines, at least formally, a Newton series (cf.~\eqref{NP})
\begin{equation}\label{vufew64}
F(\gamma)=f^{(0)}+\sum_{n=1}^\infty \frac{1}{n!}\langle (\gamma)_n,f^{(n)}\rangle,\quad \gamma\in\Gamma(\X).\end{equation}
We denote by $N_q$ the space of all such (formal) series. Furthermore, by defining a norm on $N_q$ by 
$$\|F\|_{N_q}:= \|f\|_{\mathcal F(L^1(\R^d,q\,dx))},$$
we obtain a Banach space  $N_q$.

Below, for $z>0$, we will denote by $\pi_z$ the Poisson measure $\pi_\sigma$ on $\Gamma(\X)$ with intensity measure $\sigma(dx)=z\,dx$.

\begin{proposition}\label{ctydey7i}
Let $q\ge z>0$. Then, for each $F\in N_q$ of the form \eqref{vufew64}, the series on the right hand side of \eqref{vufew64} converges absolutely $\pi_z$-a.e.\ on $\Gamma(\X)$. Furthermore, $N_q\subset L^1(\Gamma(\X),\pi_z)$ and, for each $F\in N_q$, 
$$\|F\|_{L^1(\Gamma(\X),\pi_z)}\le \|F\|_{N_q} .$$
\end{proposition}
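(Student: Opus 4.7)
The plan is to reduce everything to a termwise application of Lemma \ref{FFN} with $\sigma(dx)=z\,dx$ and then use the monotonicity $q\ge z$ to dominate the resulting sum by $\|F\|_{N_q}$.

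First I would fix $F\in N_q$ of the form \eqref{vufew64} with associated sequence $f=(f^{(n)})_{n=0}^\infty\in\mathcal F(L^1(\R^d,q\,dx))$. Applying Lemma \ref{FFN} to each symmetric function $f^{(n)}\in L^1((\X)^n,z^n\,dx_1\dotsm dx_n)$ (the integrability of $f^{(n)}$ against $z^n\,dx_1\cdots dx_n$ follows from the assumed integrability against $q^n\,dx_1\cdots dx_n$ since $0<z\le q$), I obtain
\[
\int_{\Ga(\X)}\bigl|\langle(\ga)_n,f^{(n)}\rangle\bigr|\,\pi_z(d\ga)\le\int_{(\X)^n}|f^{(n)}(x_1,\ldots,x_n)|\,z^n\,dx_1\dotsm dx_n.
\]

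Next I would sum the $\frac1{n!}$-weighted bounds. By Tonelli's theorem applied to the positive sum,
\[
\int_{\Ga(\X)}\sum_{n=0}^\infty\frac1{n!}\bigl|\langle(\ga)_n,f^{(n)}\rangle\bigr|\,\pi_z(d\ga)\le\sum_{n=0}^\infty\frac{z^n}{n!}\int_{(\X)^n}|f^{(n)}|\,dx_1\dotsm dx_n.
\]
Using $(z/q)^n\le 1$, the right-hand side is bounded by
\[
\sum_{n=0}^\infty\frac{q^n}{n!}\int_{(\X)^n}|f^{(n)}|\,dx_1\dotsm dx_n=\sum_{n=0}^\infty\|f^{(n)}\|_{\mathcal F_n(L^1(\R^d,q\,dx))}=\|f\|_{\mathcal F(L^1(\R^d,q\,dx))}=\|F\|_{N_q}.
\]

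Finally I would extract the two conclusions. Since the integral of the positive sum is finite, the integrand must be finite $\pi_z$-a.e., which gives absolute convergence of the Newton series \eqref{vufew64} for $\pi_z$-a.a.\ $\ga$. The pointwise bound $|F(\ga)|\le\sum_{n=0}^\infty\frac1{n!}|\langle(\ga)_n,f^{(n)}\rangle|$ combined with the estimate above then yields $F\in L^1(\Ga(\X),\pi_z)$ and $\|F\|_{L^1(\Ga(\X),\pi_z)}\le\|F\|_{N_q}$.

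There is no real obstacle here beyond bookkeeping: the only subtlety is that the $\frac1{n!}$ in the norm on $\mathcal F_n(L^1(\R^d,q\,dx))$ exactly cancels the $\frac1{n!}$ appearing in the Newton expansion \eqref{vufew64}, so the Lemma \ref{FFN} bound matches the definition of $\|\cdot\|_{N_q}$ without any extra combinatorial factor. The inequality $q\ge z$ enters only to turn the natural $z$-weighted estimate into a $q$-weighted one.
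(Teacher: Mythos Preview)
Your proof is correct and follows essentially the same route as the paper's own argument: apply Lemma~\ref{FFN} termwise, exchange sum and integral by Tonelli, and use $z\le q$ to bound the result by $\|F\|_{N_q}$. The only cosmetic difference is that the paper writes the termwise quantity as $\langle(\gamma)_n,|f^{(n)}|\rangle$ rather than $|\langle(\gamma)_n,f^{(n)}\rangle|$, which is immaterial since $(\gamma)_n$ is a nonnegative measure.
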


\begin{proof} By Lemma~\ref{FFN} (see also formula \eqref{FgftyF} from its proof), we have:
\begin{align}
\int_{\Gamma(\X)}\sum_{n=1}^\infty \frac{1}{n!} \langle(\ga)_n,|f^{(n)}|\,\rangle\,\pi_z(d\gamma)&\le \sum_{n=1}^\infty\frac{1}{n!} \int_{\Gamma(\X)} \langle(\ga)_n,|f^{(n)}|\,\rangle\,\pi_z(d\gamma)\notag\\
&= \sum_{n=1}^\infty \frac{z^n}{n!}\int_{(\X)^n}|f^{(n)}(x_1,\dots,x_n)|\, dx_1\dotsm dx_n\label{vcys5w}\\
&\le \sum_{n=1}^\infty \frac{q^n}{n!}\int_{(\X)^n}|f^{(n)}(x_1,\dots,x_n)|\, dx_1\dotsm dx_n<\infty.\notag
\end{align}
From here the statement easily follows.
\end{proof}

\begin{remark}
Note that, if in the series \eqref{vufew64}, $f^{(0)}\ge0$ and all functions $f^{(n)}(x_1,\dots,x_n)\ge0$ for $dx_1\dotsm dx_n$-a.a.\ $(x_1,\dots,x_n)\in(\X)^n$, then formula \eqref{vcys5w} implies that
$$\|F\|_{L^1(\Gamma(\X),\pi_z)}= \|F\|_{N_z} .$$
\end{remark}

\begin{remark} A weak point of Proposition~\ref{ctydey7i} is that it does not specify explicitly for which $\gamma\in\Gamma(\X)$ the Newton series \eqref{vufew64} converges for a given $F\in N_q$. 
\end{remark}

\begin{remark} One may use the $N_q$ spaces as appropriate  functional spaces for the study of finite difference Markov generators
and corresponding Markov semigroups. We expect that an explicit form of action of a finite difference generator  on falling factorials should give us a control of such a generator and a possibility to apply semigroup theory. 
We hope to develop such a theory in a  forthcoming paper.
\end{remark}

\subsection{Nuclear spaces}
In this section, it will be convenient for us to deal with complex-valued functions. More precisely, we will consider Newton series of the form \eqref{vufew64} with symmetric functions $f^{(n)}:(\X)^n\to\mathbb C$ being complex-valued. Hence, the function $F(\gamma)$ defined through \eqref{vufew64} will also be complex-valued. 

For each $p\in\mathbb N$, we denote by $S_p(\R^d)$ the (complex) Sobolev space $W^{p,2}(\R^d,(1+|x|^2)^pdx)$, i.e., the closure of $C_0(\R^d;\mathbb  C)$ in the Hilbertian norm
$$\|f\|_p=\bigg(\int_{\X}\sum_{|\alpha|\le p}|D^\alpha f(x)|^2(1+|x|^2)^pdx\bigg)^{1/2}.$$
Then the projective limit
$$S(\X)=\operatornamewithlimits{proj\,lim}_{p\to\infty}S_p(\X)$$
is the Schwartz space of rapidly decreasing smooth functions on $\X$, see e.g.\ \cite[Chapter 14, Section~4.3]{BSU}. Furthermore, for each $n\ge 1$, we define the $n$th symmetric tensor power of $S(\X)$ by 
$$S(\X)^{\odot n}=\operatornamewithlimits{proj\,lim}_{p\to\infty}S_p(\X)^{\odot n},$$
where the Hilbert space $S_p(\X)^{\odot n}$ is the $n$th symmetric tensor power of $S_p(\R^n)$. As easily seen, $S(\X)^{\odot n}$ is the subspace of $S(\R^{dn})$ consisting of all symmetric functions $f^{(n)}:(\X)^n\to\mathbb C$ from $S(\R^{dn})$.

Let $S'(\R^d)$ denote the dual space of $S(\X)$, i.e., the Schwartz space of tempered distributions. We have
$$S'(\X)=\operatornamewithlimits{ind\,lim}_{p\to\infty}S_{-p}(\X),$$
where $S_{-p}(\X)$ is the dual space of $S_p(\X)$ with respect to the center space $L^2(\X,dx)$. We similarly have
$$S'(\X)^{\odot n}=\operatornamewithlimits{ind\,lim}_{p\to\infty}S_{-p}(\X)^{\odot n}.$$

For each $\omega\in S'(\X)$, we define $(\omega)_n\in S'(\X)^{\odot n}$ by formula \eqref{cxrw435q}.

Let $\beta\in(0,1]$. Similarly to Subsection~\ref{ftayfkt},  we denote by $\Phi_\beta(S'(\X))$ the space of formal Newton series
\begin{equation}\label{ydfwy64}
F(\omega)=f^{(0)}+\sum_{n=1}^\infty \frac{1}{n!}\langle (\omega)_n,f^{(n)}\rangle,\quad \omega\in S'(\X)\end{equation}
such that, for each $n\in\mathbb N$, we have  $f^{(n)}\in S(\X)^{\odot n}$ and for any $p,q\in\mathbb N$,
\begin{equation}\label{ftys5wqa4}
\|F\|_{p,q,\beta}:=|f^{(0)}|+\sum_{n=1}^\infty (n!)^{1/\beta}q^n\|f^{(n)}\|_p<\infty. \end{equation}
The norms in \eqref{ftys5wqa4} determine a nuclear space topology in $\Phi_\beta(S'(\X))$, see \cite[Remark~2.11]{Sheffer}.

\begin{remark}Let $f^{(n)}\in S(\X)^{\odot n}$. Define $\Ga_0(\X): =\bigsqcup_{n=0}^\infty \Ga^{(n)} (\X)$, where $\Ga^{(n)}(\X)$ is the set of $n$-point subsets (configurations) of $\X$. 
    As well known, $\Gamma_0(\R^d)\subset S'(\X)$, hence for each $\gamma\in \Gamma_0(\R^d)$, we have $(\gamma)_n\in S'(\X)^{\odot n}$, and so the dual pairing $\langle (\gamma)_n,f^{(n)}\rangle$ is well defined. We note that the function $\Gamma_0(\X)\ni\gamma\mapsto\langle (\gamma)_n,f^{(n)}\rangle$ uniquely identifies the function $f^{(n)}$. Hence,  the falling factorial $\Gamma(\X)\cap S'(\X)\ni\gamma\mapsto\langle (\gamma)_n,f^{(n)}\rangle$ admits a unique extension to a falling factorial $S'(\X)\ni\omega\mapsto \langle(\omega)_n,g^{(n)}\rangle$ with $g^{(n)}\in S(\X)^{\odot n}$, i.e., we must have the equality $g^{(n)}=f^{(n)}$. 
\end{remark}

We denote by $\mathcal E^\beta_{\text{min}}(S'(\X))$ the space of entire functions on $S'(\X)$ of order at most $\beta$ and minimal type (when the order is equal to $\beta$.) In other words, $\mathcal E^\beta_{\text{min}}(S'(\X))$ consists of all entire functions $F:S'(\X)\to\mathbb C$ that satisfiy
\begin{equation}\label{cxrw65u432w}
\mathbf n_{p,q,n,\beta}(F):=\sup_{\omega\in S_{-p}(\X)}|F(\omega)|\exp(-q^{-n}\|\omega\|^{\beta}_{S_{-p}(\X)})<\infty. \end{equation}
See e.g.\ \cite[Section~2]{Sheffer} and the references therein for details.
The norms in \eqref{cxrw65u432w} determine a nuclear space topology in  $\mathcal E^\beta_{\text{min}}(S'(\X))$, see \cite[Remark~2.11]{Sheffer}.

The following result is a consequence of \cite[Proposition~4.1]{Sheffer}.

\begin{proposition}
We have the following equality of topological vector spaces:
$$\Phi_\beta(S'(\X))= \mathcal E^\beta_{\mathrm{min}}(S'(\X)).$$
Furthermore, for each function $F\in\mathcal E^\beta_{\mathrm{min}}(S'(\X))$ its Newton series converges in the topology of the space 
 $\mathcal E^\beta_{\mathrm{min}}(S'(\X))$.
\end{proposition}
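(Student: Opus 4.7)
The plan is to reduce the statement to \cite[Proposition~4.1]{Sheffer}, which characterizes $\mathcal E^\beta_{\mathrm{min}}(S'(\X))$ as precisely the space of entire functions $F:S'(\X)\to\mathbb C$ that admit a convergent expansion of the form $F(\omega)=g^{(0)}+\sum_{n=1}^\infty\frac{1}{n!}\langle\omega^{\otimes n},g^{(n)}\rangle$ with coefficients $g^{(n)}\in S(\X)^{\odot n}$ satisfying norm bounds of exactly the same shape as \eqref{ftys5wqa4}. Thus the question is: the two expansions, one in monomials $\omega^{\otimes n}$ and the other in the falling factorials $(\omega)_n$, define the same class of functions, with equivalent topologies. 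This is a purely algebraic/combinatorial matter: one has to translate the two bases into each other and check that the translation is continuous in the relevant nuclear topology.

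First, I would record the explicit triangular change of basis. From the recurrence \eqref{vfdr6r5}, iterated, or equivalently from the Stirling operators of the first and second kind introduced in \cite{Stirling}, there exist continuous linear maps $\mathbf{s}(n,k),\mathbf{S}(n,k):S(\X)^{\odot k}\to S(\X)^{\odot n}$ (given by symmetrizations of tensor products with diagonal delta-kernels) such that, for all $\omega\in S'(\X)$,
\begin{equation*}
(\omega)_n=\sum_{k=1}^n \langle\omega^{\otimes k},\mathbf{s}(n,k)\,\cdot\,\rangle,\qquad \omega^{\otimes n}=\sum_{k=1}^n \langle(\omega)_k,\mathbf{S}(n,k)\,\cdot\,\rangle,
\end{equation*}
in the sense of pairing with a test tensor. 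Applied to a Newton series \eqref{ydfwy64}, this yields the formal monomial expansion with coefficients $g^{(k)}=\sum_{n\geq k}\frac{1}{n!}\mathbf{s}(n,k)^\ast f^{(n)}\in S(\X)^{\odot k}$, and conversely.

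The main task is then to show that the two systems of seminorms are equivalent, i.e., that for every $p,q$ there exist $p',q'$ with a continuous estimate $\sum_k(k!)^{1/\beta}(q')^k\|g^{(k)}\|_{p'}\leq C\sum_n(n!)^{1/\beta}q^n\|f^{(n)}\|_p$, and vice versa. The key input is the norm estimate $\|\mathbf{s}(n,k)^\ast f^{(n)}\|_{p'}\leq C_p\, a(n,k)\,\|f^{(n)}\|_p$ (and similarly for $\mathbf{S}$) proved in \cite{Stirling}, where the combinatorial factors $a(n,k)$ grow essentially like $\binom{n}{k}$ times an $n$-uniform constant, reflecting that the Stirling operators only contract pairs of variables via delta-diagonals. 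After reindexing the double sum and absorbing the binomial factors into a shift of the parameter $q$ (using $\binom{n}{k}q^n\leq (2q)^n$), the $(n!)^{1/\beta}$ weight dominates and the series converges. A symmetric argument in the opposite direction completes the proof of the equality of spaces; the convergence of the Newton series in the topology of $\mathcal E^\beta_{\mathrm{min}}(S'(\X))$ then follows from the convergence of the monomial series established in \cite{Sheffer} together with the continuity of the change of basis.

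The main obstacle I expect is bookkeeping the Stirling-operator norm bounds carefully enough that the combinatorial loss (the factor $a(n,k)$) is indeed absorbed by a uniform shift of $p$ and $q$, without disturbing the $(n!)^{1/\beta}$ weight; once this is in hand, the rest is formal manipulation of nuclear projective/inductive limits.
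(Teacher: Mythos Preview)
The paper gives no proof at all: it simply states that the proposition is ``a consequence of \cite[Proposition~4.1]{Sheffer}'' and moves on. So there is nothing to compare against at the level of detailed argument.

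Your proposal correctly identifies the same reduction, but you then insert an extra layer that is almost certainly unnecessary. The cited paper \cite{Sheffer} is about \emph{Sheffer homeomorphisms}: its Proposition~4.1 is formulated not for the monomial basis specifically but for a general Sheffer (in particular, binomial-type) sequence, and it asserts precisely that the corresponding coefficient space, equipped with norms of the shape \eqref{ftys5wqa4}, coincides with $\mathcal E^\beta_{\mathrm{min}}(S'(\X))$ as a topological vector space. Since the falling factorials on $\mathcal D'(\X)$ are the prototypical binomial-type sequence (cf.\ \eqref{gu7dei6}), the present proposition is a direct specialization of that result. In other words, the ``Sheffer homeomorphism'' already \emph{is} the change-of-basis isomorphism you are trying to reconstruct by hand via the Stirling operators; you are re-proving a piece of the cited theorem rather than invoking it.

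If you nonetheless wanted to carry out your program independently, the genuine gap to watch is the claimed estimate $\|\mathbf{s}(n,k)^\ast f^{(n)}\|_{p'}\leq C_p\, a(n,k)\,\|f^{(n)}\|_p$ in the weighted Sobolev norms $\|\cdot\|_p$ defining $S_p(\X)$. The Stirling operators act by restriction to diagonals, and while such restrictions are well behaved on Schwartz functions, \cite{Stirling} does not supply quantitative bounds in this particular scale of Hilbert norms; you would have to establish them yourself (and track how the Sobolev index $p$ shifts under diagonal trace maps). This is exactly the analytic content that \cite{Sheffer} packages once and for all, which is why the present paper is content with a bare citation.
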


\subsection*{Acknowledgments}

MJO was supported by the Portuguese national funds through FCT--Funda{\c c}\~ao para a Ci\^encia e a Tecnologia, within the project UIDB/04561/2020 (\url{https://doi.org/10.54499/UIDB/04561/2020}).

\end{document}